\newtheorem{theorem}{Theorem}[section]
\newtheorem{proposition}[theorem]{Proposition}
\newtheorem{lemma}[theorem]{Lemma}
\newtheorem{corollary}[theorem]{Corollary}
\newtheorem{remark}[theorem]{Remark}
\newtheorem{assumption}[theorem]{Assumption}
\newcommand{\sq}{\Box}
\begin{document}

\title{Optimal investment under behavioural criteria --
a dual approach}
\author{Mikl\'os R\'asonyi\thanks{MTA Alfr\'ed R\'enyi Institute of Mathematics, Budapest and University of Edinburgh,
e-mail: rasonyi@renyi.mta.hu}\and 
Jos\'e G. Rodr\'\i guez-Villarreal\thanks{University of Edinburgh,
e-mail: {J.G.Rodriguez-Villarreal@sms.ed.ac.uk}}}

\date{\today}

\maketitle

\begin{abstract}
We consider a discrete-time, generically incomplete market model and 
a behavioural investor with power-like utility and distortion
functions. The existence of optimal strategies in this setting has been shown in \cite{CarRas11}
under certain conditions on the parameters of these power functions. 

In the present paper we prove the existence of optimal strategies under a different set
of conditions on the parameters, identical to the ones in \cite{RR11}, which were shown to be
necessary and sufficient in the Black-Scholes model. We also relax some assumptions of \cite{CarRas11}.

Although there exists no natural dual problem for optimisation under behavioural criteria (due to the lack
of concavity), we will rely on techniques based on the usual duality between attainable
contingent claims and equivalent martingale measures. 
\end{abstract}

\noindent Keywords: cumulative prospect theory, behavioural investors, optimal portfolio choice, probability distortion, non-concave utility, well-posedness and existence.

\noindent MSC classification: {Primary G 11; Secondary G 12.}

\section{Introduction}

This paper complements and improves results of \cite{CarRas11} where the existence of an optimal strategy for an
investor with behavioural criteria was proved under certain parameter restrictions (Assumption \ref{p}b below). 
Here we show the same
result under different restrictions on the parameters (Assumption \ref{p}a) which are identical to the ones in 
\cite{RR11} but they are neither stronger nor weaker 
than Assumption \ref{p}b. Assumption \ref{p}a is necessary and sufficient
in certain continuous-time models (this is shown in \cite{RR11} except a borderline case whose proof is yet unpublished). 
Furthermore, we manage to reprove the main result of \cite{CarRas11} under somewhat weaker assumptions.

The key new ideas are imported from \cite{RR11} and rely on the construction of an equivalent martingale measure for the 
price process whose density has nice integrability properties (see Lemma \ref{construct} below). It is this martingale
measure that permits us to prove the tightness of an optimising sequence of strategies (Lemma \ref{tajtt} below), that's why
we call our approach a dual one. 

\section{Model description}\label{222}

Fix an  integer $T>0$ acting as time horizon in the sequel and
a filtered probability space $\left(\Omega,\mathcal{F},\left\{ \mathcal{F}_{t}\right\} _{t=0,\ldots,T},\mathbb{P}\right)$.
We consider a financial market evolving in discrete time
consisting of $d$ risky assets whose discounted prices are
given by an $\mathbb{R}^{d}$-valued adapted stochastic process, $S=\left(S_{t}\right)_{t=0,\ldots,T}$
where $S_{t}=\left(S_{t}^{1},\ldots,S_{t}^{d}\right)$.

In addition, we are assuming the
financial market to be liquid and frictionless, that is, all costs
and constraints associated with transactions are non-existent, investors
are allowed to short-sell stocks and to borrow money, and it is always
possible to buy or sell an unlimited number of shares of any asset.

We denote by $\Xi_{t}^{d}$ the set of $d-$dimensional $\mathcal{F}_{t}-$measurable
random variables. Let $\mathscr{W}$ be the set of $\mathbb{R}$-valued (or $\mathbb{R}^d$-valued) random variables
$Y$ such that $\mathbf{E}_{\mathbb{P}}\left|Y\right|^{p}<\infty$
for all $p>0$. 

Trading stategies
are characterised by an initial capital $z$ and a $d$-dimensional process
$\left\{ \theta_{t}:1\leqslant t\leqslant T\,\right\}$ representing the holdings in the respective assets. We assume $\theta$ to be 
predictable, i.e. $\theta_t\in\Xi_{t-1}^{d}$ for all $t$. The class of all such
strategies is denoted by $\Phi$. 

We define $X_{t}^{z}\left(\theta\right):=z+\sum_{k=1}^{t}\theta_{k}\cdot\Delta S_{k}$,
the value process of a portfolio with initial investment $z$ and trading
strategy $\theta$, where $\Delta S_k:=S_k-S_{k-1}$ and $\cdot$ denotes scalar product. For $x\in\mathbb{R}$ the notations
$x_+,x_-$ stand for positive and negative parts, respectively.

\begin{assumption}\label{sna} For all $t\geqslant1$, $\Delta S_{t}\in\mathscr{W}$.
Furthermore, for $0\leqslant t\leqslant T-1,$ there exist $\mathcal{F}_t$-measurable $\kappa_{t},\,\beta_{t}>0$
satisfying $\frac{1}{\kappa_{t}},\,\frac{1}{\beta_{t}}\in\mathscr{W}$
such that 
\begin{equation}\label{ssna}
\mathrm{ess}\inf_{\xi\in\Xi_{t}^{d}}\mathbb{P}\left(\xi\cdot\Delta S_{t+1}\leqslant-\kappa_{t}\left|\xi\right|\left|\mathcal{F}_{t}\right.\right)\geqslant\beta_{t}\,\,\mbox{a.s.}
\end{equation}
\end{assumption}
We may and will assume $\kappa_t,\beta_t\leqslant 1$ in the sequel. 
As pointed out in \cite{CarRas11}, \eqref{ssna} is a strengthened form
of the absence of arbitrage condition.
We denote by $\mathcal{M}^{e}\left(S\right)$ the set of equivalent martingale measures for $S$. 
Recall that, under the standard no arbitrage hypothesis, $\mathcal{M}^{e}\left(S\right)\neq \emptyset$, see e.g. \cite{JS98}.
Assumption \ref{sna} will allow us to construct a particular $\mathbb{Q}\in \mathcal{M}^{e}\left(S\right)$
with favourable properties, see Lemma \ref{construct} below.

Now we turn to the description of an economic agent.
Her attitude towards gains and loses will be described in terms
of functions $u_{+}$ and $u_{-}$. In addition, she will be assumed to distort
the ``real world'' distributions (probabilities) by means of functions
$w_{+}$ and $w_{-}.$ She will further have a ``benchmark''
or reference point $B$ which is used when evaluating portfolio payoffs
at the terminal time $T$.

\begin{assumption}\label{u} We assume that $u_{\pm}:\mathbb{R}^{+}\rightarrow\mathbb{R}^+$
and $w_{\pm}:\left[0,1\right]\rightarrow\left[0,1\right]$ are measurable
functions such that $u_{\pm}\left(0\right)=0,\, w_{\pm}\left(0\right)=0,\, w_{\pm}\left(1\right)=1,$
and 
\begin{equation}
u_{+}\left(x\right)\leqslant k_{+}\left(x^{\alpha}+1\right),
\end{equation}
\begin{equation}
k_{-}\left(x^{\beta}-1\right)\leqslant u_{-}\left(x\right),
\end{equation}
\begin{equation}
w_{+}\left(p\right)\leqslant g_{+}p^{\gamma},
\end{equation}
\begin{equation}\label{neggi}
w_{-}\left(p\right)\geqslant g_{-}p^{\delta},
\end{equation}
with $\alpha,\beta,\gamma,\delta>0$, $k_{\pm},g_{\pm}>0$
fixed constants.
\end{assumption}

\begin{assumption}\label{p} This concerns the parameters involved in Assumption \ref{u}. For convenience, we shall consider two separate cases.
\\ 
\noindent{Assumption} \ref{p}a. The parameters $\alpha,\beta,\gamma$ and $\delta$ satisfy 
\begin{equation}\label{pa}
\alpha<\beta\mbox{ and }\frac{\alpha}{\gamma}<1<\frac{\beta}{\delta}.
\end{equation}

\noindent{Assumption} \ref{p}b. The parameters $\alpha,\beta,\gamma$ and $\delta$ are such that 
\begin{equation}\label{pb}
\delta\leq 1,\ \alpha<\beta,\mbox{ and }\frac{\alpha}{\gamma}<\beta.
\end{equation}
\end{assumption}

\begin{assumption}\label{b} Similarly to the previous assumption, we consider two cases. 
\\ 
\noindent Assumption \ref{b}a. The reference point $B\in\Xi_{T}^{1}$ belongs to $L^{1+r}\left(\mathbb{P}\right)$
for some $r>0$.\\
 
\noindent Assumption \ref{b}b. For the reference point $B\in\Xi_{T}^{1}$ there is a trading strategy $\phi\in\Phi$ and 
initial capital $b\in\mathbb{R}$ satisfying
\begin{equation}
X_{T}^{b}\left(\phi\right)=b+\sum_{t=1}^{T}\phi_{t}\cdot\Delta S_{t}\leqslant B.
\end{equation}
\end{assumption}

Clearly, in Assumption \ref{p}b above, one of the two conditions $\alpha<\beta$ and $\alpha/\gamma<\beta$
subsumes the other, depending on whether $\gamma>1$ or $\gamma\leq 1$.
An economic interpretation can easily be given to Assumption \ref{b}b. It means that the losses occurring in the 
behavioural investor's benchmark are comparable to the value of some self-financing portfolio. 

Given a real-valued random variable $X$ representing the outcome of an investment, a behavioural agent measures her satisfaction distorting
the expected utility of profits as well as the expected ''dissatisfaction'' of losses. 
Consider the nonlinear
functionals $V_{+}\left(X\right)$ and $V_{-}\left(X\right)$ defined below. Let

\begin{equation}
V_{+}\left(X\right):=\int_{0}^{\infty}w_{+}\left(\mathbb{P}\left(u_{+}\left(X_{+}\right)>y\right)\right)dy.\label{eq:behavioural functional 1}
\end{equation}
Notice that $V_{+}$ incorporates the utility of the investor on gains and $w_{+}$ produces a non-linear alteration of the 
given probability distribution.
If $w_{+}\left(x\right)=x$ then we return to the expected utility framework since in this case $V_{+}\left(X\right)=\mathbf{E}u_{+}\left(\left(X-B\right)_{+}\right)$.
Similarly, let
\begin{equation}
V_{-}\left(X\right):=\int_{0}^{\infty}w_{-}\left(\mathbb{P}\left(u_{-}\left(X_{-}\right)>y\right)\right)dy,\label{eq:behavioural function 2}
\end{equation}
and, finally, the objective or performance functional we aim to optimise is defined by 
\begin{equation}
V\left(X\right):=V_{+}\left(X\right)-V_{-}\left(X\right),\label{eq:behavioural functional}
\end{equation}
provided that at least one of the summands is finite.

According to the cummulative prospective theory (CPT) developed in
\cite{KT79} and \cite{tk}, behavioural
investors assess their satisfaction from a given portfolio at terminal
time $T$ by means of the functional defined in (\ref{eq:behavioural functional})
and the benchmark $B$. So we define the functionals $V_{+},\, V_{-}$ below by 
\begin{equation}
V_{+}\left(z,\theta_{1},\ldots,\theta_{T}\right):=V_{+}\left(X_{T}^{z}\left(\theta\right)\right)=\int_{0}^{\infty}w_{+}\left(\mathbb{P}\left(u_{+}\left(\left(X_{T}^{z}\left(\theta\right)-B\right)_{+}\right)>y\right)\right)dy,
\end{equation}
\begin{equation}
V_{-}\left(z,\theta_{1},\ldots,\theta_{T}\right):=V_{-}\left(X_{T}^{z}\left(\theta\right)\right)=\int_{0}^{\infty}w_{-}\left(\mathbb{P}\left(u_{-}\left(\left(X_{T}^{z}\left(\theta\right)-B\right)_{-}\right)>y\right)\right)dy.
\end{equation}
We say that a trading strategy $\theta\in\Phi$
is admissible for initial capital $z$ if $V_{-}\left(X_{T}^{z}\left(\theta\right)\right)<\infty$.
We denote the set of such trading strategies by $\mathcal{A}\left(z\right)$ and define, for $\theta\in\mathcal{A}(z)$,
\[
V\left(z,\theta_{1},\ldots,\theta_{T}\right):=V\left(X^{z}_T\left(\theta\right)\right)=V_{+}\left(z,\theta_{1},\ldots,\theta_{T}\right)-V_{-}\left(z,\theta_{1},\ldots,\theta_{T}\right).
\]

The optimal portfolio problem for a behavioural investor consists in finding 
$\theta^{\star}=\left(\theta_{1}^{\star},\ldots,\theta_{T}^{\star}\right)\in\mathcal{A}(z)$
such that
\begin{equation}
\sup_{\theta\in\mathcal{A}\left(z\right)}V\left(z,\theta_{1},\ldots,\theta_{T}\right)=V\left(z,\theta_{1}^{\star},\ldots,\theta_{T}^{\star}\right).
\label{eq:optimal behaviour}
\end{equation}

\section{Main results}

As it is well known, most discrete-time market models are incomplete, i.e. $\mathcal{M}^e(S)$ is not a singleton, hence 
the problem of how to choose a suitable equivalent martingale measure $\mathbb{Q}$ arises. 

\begin{lemma}\label{construct}
Under Assumption \ref{sna} there exists $\mathbb{Q}\in\mathcal{M}^e(S)$ such that for $\rho:=d\mathbb{Q}/d\mathbb{P}$ we have
both $\rho,1/\rho\in\mathscr{W}$.
\end{lemma}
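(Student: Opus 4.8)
The plan is to build $\mathbb{Q}$ through its density process, thereby reducing the whole statement to a quantitative one-step problem. Write $\rho=\prod_{t=1}^{T}\zeta_{t}$, where each $\zeta_{t}$ is $\mathcal{F}_{t}$-measurable, strictly positive, satisfies $\mathbf{E}_{\mathbb{P}}[\zeta_{t}\mid\mathcal{F}_{t-1}]=1$, and, crucially, the martingale condition $\mathbf{E}_{\mathbb{P}}[\zeta_{t}\,\Delta S_{t}\mid\mathcal{F}_{t-1}]=0$. By the Bayes rule this last identity is exactly equivalent to $S$ being a $\mathbb{Q}$-martingale, so any such family yields a $\mathbb{Q}\in\mathcal{M}^{e}(S)$. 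It therefore suffices to produce, for each fixed $t$ and working conditionally on $\mathcal{F}_{t}$, a one-step conditional density $\zeta_{t+1}$ centring the law of $Y:=\Delta S_{t+1}$, together with two-sided bounds of the form $\zeta_{t+1}\leqslant a_{t}$ and $1/\zeta_{t+1}\leqslant b_{t}\,(1+|Y|)^{q}$, where $q$ is a fixed exponent and $a_{t},b_{t}$ are $\mathcal{F}_{t}$-measurable and belong to $\mathscr{W}$.

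The heart of the argument is this one-step construction, and it is the only place where \eqref{ssna} is genuinely used. Conditionally on $\mathcal{F}_{t}$, applying \eqref{ssna} to both $\xi$ and $-\xi$ shows that the law of $Y$ carries conditional mass at least $\beta_{t}$ beyond the level $\pm\kappa_{t}$ in every direction; thus the origin sits in the interior of the conditional convex hull of the support in a quantitative, direction-uniform way. I would exploit this to centre the law: searching among tilted densities indexed by a drift parameter $\lambda$, the induced conditional barycentre $\lambda\mapsto\mathbf{E}_{\mathbb{P}}[\zeta^{\lambda}Y\mid\mathcal{F}_{t}]$ is continuous and, by the quantitative interiority, points outward on a sphere of radius controlled by $1/(\kappa_{t}\beta_{t})$; a Poincar\'e--Miranda (degree) argument then furnishes a $\lambda_{t}$ with vanishing barycentre, i.e. $\mathbf{E}_{\mathbb{P}}[\zeta_{t+1}\,\Delta S_{t+1}\mid\mathcal{F}_{t}]=0$. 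To keep $\zeta_{t+1}$ bounded above while retaining a lower bound compatible with equivalence, I would use a compactly truncated tilt admixed with a small, strictly positive reference density (a negative power of $1+|Y|$), so that the resulting bounds $a_{t},b_{t}$ are expressed through $1/\kappa_{t},1/\beta_{t}\in\mathscr{W}$ and $Y\in\mathscr{W}$. A measurable selection is needed throughout to make $\lambda_{t}$, hence $\zeta_{t+1}$, $\mathcal{F}_{t+1}$-measurable; this follows from a standard selection theorem applied to the measurable, compact-valued set of admissible parameters.

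Finally I would assemble the product and verify the integrability claim. The one-step bounds give, for every $p$, $\mathbf{E}_{\mathbb{P}}[\zeta_{t+1}^{p}\mid\mathcal{F}_{t}]\leqslant a_{t}^{p}$ and $\mathbf{E}_{\mathbb{P}}[\zeta_{t+1}^{-p}\mid\mathcal{F}_{t}]\leqslant b_{t}^{p}\,\mathbf{E}_{\mathbb{P}}[(1+|Y|)^{pq}\mid\mathcal{F}_{t}]$, and the right-hand sides lie in $\mathscr{W}$ (using conditional Jensen for the last factor). Writing $\rho^{p}=\prod_{t}\zeta_{t}^{p}$ and peeling off the factors from $t=T$ downwards with the tower property, each conditional expectation contributes an $\mathscr{W}$ multiplier that can be split from the remaining product by the Cauchy--Schwarz inequality; iterating $T$ times doubles the exponent at each step but keeps it finite, so $\mathbf{E}_{\mathbb{P}}\rho^{p}<\infty$ for all $p$, and the identical scheme applied to $1/\rho=\prod_{t}\zeta_{t}^{-1}$ gives $\mathbf{E}_{\mathbb{P}}\rho^{-p}<\infty$. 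The main obstacle is clearly the one-step step: securing the mean-zero equivalent density with two-sided bounds controlled \emph{uniformly in $\omega$} by $\mathscr{W}$-quantities, together with the measurable selection of the centering parameter. The propagation of all-order integrability through the $T$-fold product is then routine.
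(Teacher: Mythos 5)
Your route is genuinely different from the paper's. The paper does not construct the density process at all: it defines the concave function $U$ in \eqref{mj}, invokes Proposition 7.1 of \cite{RS05} to get $\rho=U'(X^0_T(\phi^*))/\mathbf{E}U'(X^0_T(\phi^*))$ with $\phi^*_t\in\mathscr{W}$, and then reads off the integrability: since $U'(x)\leqslant 1+|x|$ one gets $\rho\in\mathscr{W}$, and since $U'\geqslant 1$ the density is bounded away from $0$, so $1/\rho$ is even bounded. Your plan (one-step conditional centering, then $\rho=\prod_t\zeta_t$) is the classical self-contained alternative, and its outer layers are correct: the Bayes-rule reduction, the peeling of the product by the tower property plus Cauchy--Schwarz with doubling exponents, and the selection of the centering parameter via a Carath\'eodory-function argument are all fine. (One standard point you gloss over: \eqref{ssna} quantifies over $\mathcal{F}_t$-measurable $\xi$, and upgrading it to ``for a.e.\ $\omega$, for \emph{all} deterministic unit directions $v$, the conditional law puts mass $\geqslant\beta_t$ on $\{v\cdot y\leqslant-\kappa_t\}$'' requires a measurable selection of bad directions; this is needed before any pointwise-in-$\omega$ construction can start.)

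The gap is in the mechanism you propose for the crux. A tilt of exponential type, truncated to $\{|y|\leqslant M_t\}$ (with $M_t\sim\mathbf{E}[\,|Y|\,|\mathcal{F}_t]/\beta_t$ so that the truncated law keeps mass $\beta_t/2$ beyond $-\kappa_t$ in every direction) and with drift ranging over a ball of radius $R_t$, has normalized density of size up to $e^{cR_tM_t}$; moreover the outward-pointing estimate genuinely forces $R_t$ of order $\kappa_t^{-1}\log\left(1/(\kappa_t\beta_t)\right)$ (a three-atom example with small $\kappa,\beta$ shows this is not an artifact of crude bounds). But $\mathscr{W}$ is stable under sums, products and powers, \emph{not} under exponentiation: $R_t,M_t\in\mathscr{W}$ gives no control on $\mathbf{E}\exp(pR_tM_t)$, since $1/\kappa_t,1/\beta_t$ are only assumed to have polynomial moments. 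So the asserted ``$a_t\in\mathscr{W}$'' fails and the propagation step collapses. The repair is to replace the exponential tilt by a \emph{uniformly bounded} family, e.g.\ $g_\lambda(y)=\varepsilon_t(1+|y|)^{-q}+A_t\,\psi(\lambda\cdot y)\mathbf{1}_{\{|y|\leqslant M_t\}}$ with $\psi$ the clamped ramp ($\psi(u)=1$ for $u\leqslant-\kappa_t$, $\psi(u)=0$ for $u\geqslant0$, linear between), $A_t:=2/\beta_t$, $\varepsilon_t:=\kappa_t/4$, $q\geqslant 1$. Then for $|\lambda|=1$ one has $\lambda\cdot\int y\,g_\lambda\,d\mu_\omega\leqslant\varepsilon_t-\kappa_tA_t\beta_t/2=-3\kappa_t/4<0$, the field points inward on the \emph{unit} sphere, the Brouwer degree is $(-1)^d\neq0$, and a zero exists; after normalization the two-sided bounds are polynomial in $1/\kappa_t$, $1/\beta_t$, $\mathbf{E}[\,|Y|\,|\mathcal{F}_t]$ and $(1+|Y|)^{q}$, hence of the $\mathscr{W}$-form you need. (Equivalently one can dispense with the parametric family: the set of barycentres $\int y\,g\,d\mu_\omega$ over densities squeezed between $\varepsilon_t(1+|y|)^{-q}$ and $A_t$ with $\int g\,d\mu_\omega=1$ is convex and compact, and \eqref{ssna} forbids strictly separating it from the origin.) With either repair your architecture does prove the Lemma; note that it then gives $\rho,1/\rho\in\mathscr{W}$, while the paper's citation-based proof yields the stronger conclusion that $1/\rho$ is bounded.
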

\begin{proof}
We rely on \cite{RS05}, which provides a utility maximisation framework where
the existence of a martingale measure with desirable properties can be guaranteed. Define the continuously
differentiable, concave function
\begin{equation}\label{mj}
{U}\left(x\right)=\begin{cases}
x-\frac{1}{2} & \mbox{\,\,\ensuremath{\mbox{if }}}x\geqslant0\\
-\frac{1}{2}\left(x-1\right)^{2} & \,\,\mbox{if}\, x<0.
\end{cases}
\end{equation}
The hypotheses of Proposition 7.1 in \cite{RS05} hold by Assumption \ref{sna} and by \eqref{mj}, hence there is  
$\mathbb{Q}\in\mathcal{M}^e(S)$ such that 
$$
\rho=\frac{d\mathbb{Q}}{d\mathbb{P}}=\frac{U'(X^{0}_T(\phi^*))}{EU'(X^{0}_T(\phi^*))}
$$
for some $\phi^*\in\Phi$. Inspecting the proof of Proposition 7.1 in \cite{RS05} one can easily check that 
$\phi_t^*\in\mathscr{W}$ for all $t$. Hence $\rho\in\mathscr{W}$ and
$\rho$ is bounded away from $0$, a fortiori, $1/\rho\in\mathscr{W}$. 
\end{proof}
We fix the probability $\mathbb{Q}$ just constructed for later use. It will be key in establishing moment estimates which underlie our main results.
Note also that, under Assumption \ref{b}a, $B\in L^{1+\epsilon}\left(\mathbb{Q}\right)$ for all $0<\epsilon<r$
by H\"older's inequality and $\rho\in\mathscr{W}$.

We first address the well-posedness of the optimal portfolio problem for a behavioural
investor. We say that the optimal investment problem (\ref{eq:optimal behaviour}) is \emph{well-posed} if the supremum in (\ref{eq:optimal behaviour}) 
is finite. If the supremum is infinite then the problem is called \emph{ill-posed}.

We know from section 3 of \cite{CarRas11} that $\alpha/\gamma\leqslant\beta/\delta$ and $\alpha<\beta$ are necessary for well-posedness.
It is an open problem whether they are sufficient as well. We show below, however, that either \eqref{pa} or \eqref{pb} 
are sufficient.

\begin{theorem}\label{wpa}
Under Assumptions \ref{sna}, \ref{u}, \ref{p}a and \ref{b}a, the optimisation
problem (\ref{eq:optimal behaviour}) is well-posed. In other words, 
\begin{equation}
\sup_{\theta\in\mathcal{A}\left(z\right)}V\left(z,\theta_{1},\ldots,\theta_{T}\right)<\infty.
\end{equation}
\end{theorem}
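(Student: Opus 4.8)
The plan is to bound $V$ uniformly over $\mathcal{A}(z)$ by exploiting the martingale measure $\mathbb{Q}$ of Lemma~\ref{construct}. Write $W:=X_T^z(\theta)-B$ and $\rho:=d\mathbb{Q}/d\mathbb{P}$; since $V_-\ge 0$ it suffices to find a constant $C$, independent of $\theta\in\mathcal{A}(z)$, with $V_+(W)-V_-(W)\le C$. The dual input is a supermartingale property: for admissible $\theta$ one has $V_-(W)<\infty$, and the lower bound on $V_-$ established below forces $R:=\mathbf{E}_{\mathbb{Q}}[W_-]<\infty$; together with $B\in L^{1+\epsilon}(\mathbb{Q})$ (valid under Assumption~\ref{b}a) this gives $X_-\le W_-+B_-\in L^1(\mathbb{Q})$. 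Hence $X_T^z(\theta)=z+\sum_k\theta_k\cdot\Delta S_k$ is a $\mathbb{Q}$-local martingale bounded below by a $\mathbb{Q}$-integrable random variable, so it is a $\mathbb{Q}$-supermartingale and $\mathbf{E}_{\mathbb{Q}}[X_T^z(\theta)]\le z$. Accounting for $B$ yields the crucial inequality $\mathbf{E}_{\mathbb{Q}}[W_+]\le c_0+R$, with $c_0$ depending only on $z$ and $\mathbf{E}_{\mathbb{Q}}|B|$. Everything then reduces to proving $V_+(W)\le C(1+R^{s})$ and $V_-(W)\ge cR^{s'}-C$ for exponents $\alpha<s<s'<\beta$, since then $R\mapsto C(1+R^s)-cR^{s'}+C$ is bounded above.

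For $V_+$ I would use the growth bounds of Assumption~\ref{u} and Markov's inequality. From $u_+(x)\le k_+(x^\alpha+1)$, $w_+(p)\le g_+p^\gamma$, the layer-cake representation of $V_+$ and Markov's inequality applied to $W_+^\alpha$ give $V_+(W)\le C\bigl(1+(\mathbf{E}_{\mathbb{P}}[W_+^{m}])^{\gamma}\bigr)$ for every $m>\alpha/\gamma$; since $\alpha/\gamma<1$ by~\eqref{pa} I fix $m\in(\alpha/\gamma,1)$. Passing to $\mathbb{Q}$ via $\mathbf{E}_{\mathbb{P}}[W_+^{m}]=\mathbf{E}_{\mathbb{Q}}[W_+^{m}/\rho]$ and Hölder with exponents $1/m$ and $1/(1-m)$, the density factor $\mathbf{E}_{\mathbb{Q}}[\rho^{-1/(1-m)}]=\mathbf{E}_{\mathbb{P}}[\rho^{-m/(1-m)}]$ is finite because $1/\rho\in\mathscr{W}$, so $\mathbf{E}_{\mathbb{P}}[W_+^{m}]\le C(\mathbf{E}_{\mathbb{Q}}[W_+])^{m}$. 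Inserting $\mathbf{E}_{\mathbb{Q}}[W_+]\le c_0+R$ gives $V_+(W)\le C(1+R^{m\gamma})$, i.e. the desired bound with $s:=m\gamma\in(\alpha,\gamma)$, which I take close to $\alpha$.

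The lower bound on $V_-$ is the main obstacle, and it is here that $1<\beta/\delta$ is used. From $u_-(x)\ge k_-(x^\beta-1)$, $w_-(p)\ge g_-p^\delta$ and a change of variables, $V_-(W)\ge c\int_1^\infty \mathbb{P}(W_->t)^{\delta}t^{\beta-1}\,dt=:cI$. Since $R=\int_0^\infty\mathbb{Q}(W_->t)\,dt$ and $\mathbb{Q}(W_->t)=\mathbf{E}_{\mathbb{P}}[\rho\,\mathbf{1}_{\{W_->t\}}]\le C\,\mathbb{P}(W_->t)^{\sigma}$ by Hölder (using $\rho\in\mathscr{W}$) for any $\sigma\in(0,1)$, we get $R\le C\int_0^\infty\mathbb{P}(W_->t)^{\sigma}\,dt$. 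A second Hölder step, inserting the weight $t^{\beta-1}$ and using the exponent $\delta/\sigma>1$ to recover the $\delta$-th power, bounds $\int_1^\infty\mathbb{P}(W_->t)^{\sigma}\,dt$ by $C\,I^{\sigma/\delta}$ as soon as the residual integral $\int_1^\infty t^{-(\beta-1)\sigma/(\delta-\sigma)}\,dt$ converges, i.e. $\sigma\beta>\delta$; a $\sigma$ with $\delta/\beta<\sigma<\min(\delta,1)$ exists precisely because $\beta/\delta>1$. Hence $R\le C(1+I^{\sigma/\delta})$, so $V_-(W)\ge cI\ge cR^{\delta/\sigma}-C$ with $s':=\delta/\sigma$ as close to $\beta$ from below as we wish. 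The careful coordination of $\sigma,\delta,\beta$ across the two changes of measure is the delicate point; borderline ranges (such as $\delta\le1$, where $\mathbb{P}^{\delta}\ge\mathbb{P}$ yields a direct lower bound by $\mathbf{E}_{\mathbb{P}}[W_-^{\beta}]$) are handled separately.

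It remains to combine the two estimates. As $\alpha<\beta$, I choose $s$ slightly above $\alpha$ and $s'$ slightly below $\beta$ with $s<s'$; then, for every $\theta\in\mathcal{A}(z)$, $V(W)=V_+(W)-V_-(W)\le C(1+R^{s})-cR^{s'}+C\le C'$, because $R\mapsto CR^{s}-cR^{s'}$ is bounded above on $[0,\infty)$ when $s<s'$. Taking the supremum over $\theta\in\mathcal{A}(z)$ gives well-posedness. The three conditions of Assumption~\ref{p}a thus play transparent roles: $\alpha/\gamma<1$ lets $V_+$ be controlled by a sublinear moment suited to the dual Hölder estimate, $\beta/\delta>1$ makes $V_-$ grow like a near-$\beta$ power of $R$, and $\alpha<\beta$ guarantees the exponent gap $s<s'$ that closes the argument.
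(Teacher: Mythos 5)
Your overall architecture --- a direct uniform bound via the dual measure $\mathbb{Q}$ of Lemma \ref{construct}: the (super)martingale property giving $\mathbf{E}_{\mathbb{Q}}[W_+]\le c_0+R$ with $R=\mathbf{E}_{\mathbb{Q}}[W_-]$, then $V_+\le C(1+R^{s})$ with $s$ near $\alpha$ and $V_-\ge cR^{s'}-C$ with $s'$ near $\beta$ --- is close in spirit to the paper, which runs the same dual idea as a proof by contradiction, outsourcing the two estimates to the imported interpolation results (Lemmas \ref{moz1} and \ref{moz2}). Your $V_+$ estimate is correct and uses $\alpha/\gamma<1$ exactly as the paper does. (A small side remark: to get $\mathbf{E}_{\mathbb{Q}}[X_T^z(\theta)]\le z$ in discrete time you should invoke Theorem 2 of \cite{JS98}, as in Proposition \ref{tajt}; the continuous-time Fatou argument you sketch needs the whole path, not just the terminal value, to be bounded below by a $\mathbb{Q}$-integrable random variable. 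This is fixable and not the issue.)

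The genuine gap is in the $V_-$ lower bound. Your scheme needs a $\sigma$ with $\delta/\beta<\sigma<\min(\delta,1)$: $\sigma<1$ for the change of measure $\mathbb{Q}(W_->t)\le C\,\mathbb{P}(W_->t)^{\sigma}$, $\sigma<\delta$ for the H\"older exponent $\delta/\sigma>1$, and $\sigma\beta>\delta$ for convergence of the residual integral. This interval is nonempty if and only if $\beta>\max(\delta,1)$, \emph{not} "precisely because $\beta/\delta>1$" as you assert: whenever $\delta<\beta\le 1$ --- a range permitted by Assumption \ref{p}a and in fact the empirically standard CPT regime (e.g.\ $\alpha=0.5$, $\beta=0.9$, $\gamma=0.61$, $\delta=0.69$) --- the interval $(\delta/\beta,\delta)$ is empty. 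Your proposed fallback for the borderline range, namely dropping the distortion via $\mathbb{P}^{\delta}\ge\mathbb{P}$ and lower-bounding $V_-$ by $c\,\mathbf{E}_{\mathbb{P}}[W_-^{\beta}]-C$, provably cannot close this case: take $W_-=M\mathbf{1}_A$ with $\mathbb{P}(A)=M^{-\epsilon}$, $0<\epsilon<\beta<1$, and $\rho$ bounded above and below (consistent with Lemma \ref{construct}); then $\mathbf{E}_{\mathbb{P}}[W_-^{\beta}]=M^{\beta-\epsilon}$ while $R\ge c\,M^{1-\epsilon}$, and since $(\beta-\epsilon)/(1-\epsilon)\downarrow 0$ as $\epsilon\uparrow\beta$, no inequality $\mathbf{E}_{\mathbb{P}}[W_-^{\beta}]\ge cR^{s'}-C$ holds uniformly for \emph{any} $s'>0$. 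Such $W$ are compatible with the mean constraint $\mathbf{E}_{\mathbb{Q}}[W]=m$ (add a suitable constant gain on $A^c$), which is the only structural information your argument retains, so the scheme itself is broken for $\beta\le 1$, not merely your write-up of it. The distortion cannot be discarded there: the paper's route through Lemmas \ref{moz1} and \ref{moz2} (from \cite{RR11}) keeps the probability raised to the \emph{same} power $\delta$ on both sides and interpolates only the power on $X$, passing through an intermediate exponent $\eta$ with $\max(\alpha,\delta)<\eta<\beta$; it is exactly this fixed-$\delta$ interpolation that makes the estimate work for all parameters satisfying \eqref{pa}, including $\beta\le1$. To repair your proof you would have to replace your two-exponent H\"older step by such an argument, i.e.\ essentially re-prove Lemma \ref{moz2}, rather than treat $\beta\le 1$ as a harmless borderline case.
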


We shall use the auxiliary results given below which were shown in \cite{RR11} (see Lemmas 3.12, 3.13 and 3.14 there). 
We include their statements for the sake of completeness. 

\begin{lemma}\label{suti}
If $a,b$ and $s$ are positive numbers satisfying $\frac{b}{sa}>1$
then there exists a constant $D$ such that
\begin{equation}
\mathbf{E}_{\mathbb{P}}\left(X^{s}\right)\leqslant 1+D\left(\int_{0}^{\infty}\mathbb{P}\left(X^{b}>y\right)^{a}dy\right)^{\frac{1}{a}}.
\end{equation}
for all non-negative random variables $X$. $\sq$
\end{lemma}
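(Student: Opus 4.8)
The plan is to prove the bound $\mathbf{E}_{\mathbb{P}}(X^s) \leqslant 1 + D\left(\int_0^\infty \mathbb{P}(X^b > y)^a\,dy\right)^{1/a}$ by relating the ordinary $s$-th moment, computed via the layer-cake formula, to the weighted integral on the right-hand side. First I would rewrite the left side using the standard identity $\mathbf{E}_{\mathbb{P}}(X^s) = \int_0^\infty \mathbb{P}(X^s > t)\,dt$, and similarly note that the right-hand integrand involves $\mathbb{P}(X^b > y)^a$. Since both quantities are expressed through the tail function $F(r) := \mathbb{P}(X > r)$, the natural strategy is to substitute so that everything is written in terms of $F$, and then exploit the exponent gap encoded in the hypothesis $\frac{b}{sa} > 1$.

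Concretely, I would set $y = u^b$ in the right-hand integral so that $\int_0^\infty \mathbb{P}(X^b > y)^a\,dy = \int_0^\infty \mathbb{P}(X > u)^a\, b\,u^{b-1}\,du = b\int_0^\infty F(u)^a u^{b-1}\,du$, and likewise write $\mathbf{E}_{\mathbb{P}}(X^s) = \int_0^\infty \mathbb{P}(X > r)\cdot s r^{s-1}\,dr = s\int_0^\infty F(r)^{\phantom{a}} r^{s-1}\,dr$ after the substitution $t = r^s$. Splitting the moment integral at $r=1$ handles the trivial small-$X$ part (bounded by the additive constant $1$), so the real content is the tail $\int_1^\infty F(r)\, r^{s-1}\,dr$. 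The key inequality to extract from the right-hand side is a decay estimate on $F$: because $\int_0^\infty F(u)^a u^{b-1}\,du =: M < \infty$ is finite and the integrand is nonnegative, a Chebyshev/Markov-type argument on the measure $u^{b-1}\,du$ forces $F(u)$ to decay, giving a pointwise bound of the form $F(u)^a \leqslant C\, M\, u^{-b}$ for large $u$, i.e. $F(u) \lesssim (M\,u^{-b})^{1/a}$.

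Plugging this decay bound into $\int_1^\infty F(r)\, r^{s-1}\,dr$ yields an integrand controlled by $M^{1/a} r^{-b/a} r^{s-1} = M^{1/a} r^{s - 1 - b/a}$, and the exponent condition $\frac{b}{sa} > 1$ — equivalently $s - b/a < 0$, so $s - 1 - b/a < -1$ — is exactly what guarantees that $\int_1^\infty r^{s-1-b/a}\,dr$ converges to a finite constant. Collecting terms gives $\mathbf{E}_{\mathbb{P}}(X^s) \leqslant 1 + D\,M^{1/a}$ with $D$ depending only on $a,b,s$, which is the claimed bound with $M^{1/a} = \left(\int_0^\infty \mathbb{P}(X^b>y)^a\,dy\right)^{1/a}$ up to the substitution factor $b^{1/a}$ absorbed into $D$.

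The main obstacle I anticipate is making the pointwise decay estimate $F(u) \lesssim (M/u^b)^{1/a}$ rigorous and uniform rather than merely asymptotic, since the elementary Markov bound only controls an average of $F^a$ against $u^{b-1}\,du$. The clean way around this is to avoid the pointwise step entirely and instead bound the moment integral directly by a Hölder inequality: write $F(r)\,r^{s-1} = \bigl(F(r) r^{(b-1)/a}\bigr)\cdot r^{s-1-(b-1)/a}$ and apply Hölder with exponents $a$ and $a' = a/(a-1)$, so that the first factor contributes exactly $M^{1/a}$ and the second factor's integral $\int_1^\infty r^{(s-1-(b-1)/a)a'}\,dr$ is finite precisely when $\frac{b}{sa} > 1$. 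This route is cleaner because it converts the whole problem into verifying a single exponent inequality, which is where the hypothesis is used, and it sidesteps any delicate handling of the tail behaviour of $F$.
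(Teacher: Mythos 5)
The paper does not reproduce a proof of this lemma (it only quotes it from \cite{RR11}), so your argument must stand on its own. Your first route --- layer-cake representation plus a pointwise decay bound on the tail function --- is correct, and the obstacle you flag there is not actually an obstacle: no Markov-type or asymptotic reasoning is needed, because the tail function $F(u)=\mathbb{P}(X>u)$ is non-increasing, hence for \emph{every} $u>0$
\begin{equation*}
M:=\int_0^\infty F(v)^a v^{b-1}\,dv \;\geqslant\; \int_0^u F(v)^a v^{b-1}\,dv \;\geqslant\; F(u)^a\,\frac{u^b}{b},
\end{equation*}
so that $F(u)\leqslant (bM)^{1/a}u^{-b/a}$ holds uniformly in $u$. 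Inserting this into $s\int_1^\infty F(r)r^{s-1}\,dr$ gives a convergent integral precisely because $s-1-b/a<-1$, i.e. $\frac{b}{sa}>1$, and yields
$\mathbf{E}_{\mathbb{P}}(X^s)\leqslant 1+\frac{sa}{b-sa}\left(\int_0^\infty \mathbb{P}(X^b>y)^a\,dy\right)^{1/a}$
(recall $bM=\int_0^\infty\mathbb{P}(X^b>y)^a\,dy$ after your substitution $y=u^b$, and the case $M=\infty$ is trivial). This is a complete proof, valid for every $a>0$.

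The genuine gap is in the route you then declare ``cleaner'' and ultimately settle on: H\"older's inequality with exponents $a$ and $a'=a/(a-1)$ requires $a>1$ (for $a<1$ the conjugate exponent is negative and the inequality reverses; $a=1$ is a degenerate $L^1$--$L^\infty$ case). The lemma is stated for arbitrary positive $a$, and in this paper it is applied with $a=\delta$, a probability-distortion exponent which is typically at most $1$ --- Assumption \ref{p}b even imposes $\delta\leq 1$ explicitly --- so your H\"older argument fails exactly in the regime where the lemma is needed. Nor can it be patched for $a<1$ by the crude bound $F\leqslant F^a$: the hypothesis $b>sa$ does not force $b>s$ (take $a=1/2$, $b=1$, $s=3/2$), so comparing $\int_1^\infty F(r)^a r^{s-1}\,dr$ directly with $M$ does not go through. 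Keep your first route; the monotonicity observation above is the one-line fix it needed.
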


\begin{lemma}\label{moz1}  
Let $d\mathbb{Q}/d\mathbb{P},d\mathbb{P}/d\mathbb{Q}\in\mathscr{W}$, $\alpha<\beta$ and $\frac{\alpha}{\gamma}<1<\frac{\beta}{\delta}$. Fix $m\in\mathbb{R}$.
Then there is some $\eta>0$ satisfying $\eta<\beta$, $\alpha<\eta$
and $\delta<\eta$, and there exist constants $L_1=L_{1}(m)$
and $L_2=L_{2}(m)$ such that 
\begin{equation}
\int_{0}^{\infty}\mathbb{P}\left(\left(X_{+}\right)^{\alpha}>y\right)^{\gamma}dy\leqslant L_{1}+L_{2}\int_{0}^{\infty}\mathbb{P}\left(\left(X_{-}\right)^{\eta}>y\right)^{\delta}dy,
\end{equation}
for all random variables $X$ with $\mathbf{E}_{\mathbb{Q}}\left[X\right]=m$. $\sq$ 
\end{lemma}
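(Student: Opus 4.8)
The plan is to reduce both distortion integrals to ordinary moments and to use the budget constraint $\mathbf{E}_{\mathbb{Q}}[X]=m$ as the bridge between gains and losses. Writing $\rho=d\mathbb{Q}/d\mathbb{P}$, so that $\rho,1/\rho\in\mathscr{W}$ by hypothesis, I would bound the left-hand ``positive'' integral above by a \emph{fractional} $\mathbb{P}$-moment of $X_{+}$, bound the right-hand ``negative'' integral below by a $\mathbb{P}$-moment of $X_{-}$ through Lemma \ref{suti}, and chain the two via $\mathbf{E}_{\mathbb{Q}}[X_{+}]=m+\mathbf{E}_{\mathbb{Q}}[X_{-}]$, passing between $\mathbb{P}$ and $\mathbb{Q}$ by H\"older's inequality. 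The role of the hypotheses is that $\alpha/\gamma<1<\beta/\delta$ keeps the exponent on $X_{+}$ strictly below $1$ and that on $X_{-}$ strictly above $1$, while $\alpha<\eta$ governs the final growth comparison.

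I would fix $\eta\in(\max\{\alpha,\delta\},\beta)$, which is possible since $\delta<\beta$. For the positive side the substitution $y=t^{\alpha}$ gives
\begin{equation}
\int_{0}^{\infty}\mathbb{P}\left((X_{+})^{\alpha}>y\right)^{\gamma}dy=\alpha\int_{0}^{\infty}t^{\alpha-1}\mathbb{P}\left(X_{+}>t\right)^{\gamma}dt;
\end{equation}
splitting at $t=1$, using $\mathbb{P}(\cdot)^{\gamma}\leqslant 1$ on $[0,1]$ and Markov's inequality $\mathbb{P}(X_{+}>t)\leqslant \mathbf{E}_{\mathbb{P}}[(X_{+})^{s_{1}}]t^{-s_{1}}$ on $[1,\infty)$, I obtain for any $s_{1}>\alpha/\gamma$ a bound
\begin{equation}
\int_{0}^{\infty}\mathbb{P}\left((X_{+})^{\alpha}>y\right)^{\gamma}dy\leqslant C_{1}+C_{2}\left(\mathbf{E}_{\mathbb{P}}[(X_{+})^{s_{1}}]\right)^{\gamma},
\end{equation}
the tail integral converging exactly because $s_{1}\gamma>\alpha$. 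As $\alpha/\gamma<1$, I may take $s_{1}\in(\alpha/\gamma,1)$.

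Passing to $\mathbb{Q}$, I write $\mathbf{E}_{\mathbb{P}}[(X_{+})^{s_{1}}]=\mathbf{E}_{\mathbb{Q}}[(1/\rho)(X_{+})^{s_{1}}]$ and apply H\"older with $p>1$ small enough that $s_{1}p<1$, followed by Jensen's inequality for the concave map $x\mapsto x^{s_{1}p}$, to get $\mathbf{E}_{\mathbb{P}}[(X_{+})^{s_{1}}]\leqslant C_{5}(\mathbf{E}_{\mathbb{Q}}[X_{+}])^{s_{1}}$, with $C_{5}<\infty$ since $1/\rho\in\mathscr{W}$. The constraint (which presupposes $\mathbb{Q}$-integrability of $X$) then gives $\mathbf{E}_{\mathbb{Q}}[X_{+}]\leqslant |m|+\mathbf{E}_{\mathbb{Q}}[X_{-}]$, and a further H\"older step $\mathbf{E}_{\mathbb{Q}}[X_{-}]=\mathbf{E}_{\mathbb{P}}[\rho X_{-}]\leqslant C_{6}(\mathbf{E}_{\mathbb{P}}[(X_{-})^{s_{2}}])^{1/s_{2}}$ holds for any $s_{2}>1$ because $\rho\in\mathscr{W}$; since $\eta/\delta>1$ I take $s_{2}\in(1,\eta/\delta)$. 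Finally Lemma \ref{suti} with $b=\eta$, $a=\delta$, $s=s_{2}$ (allowed as $s_{2}<\eta/\delta$) bounds the negative integral from below: with $A:=\mathbf{E}_{\mathbb{P}}[(X_{-})^{s_{2}}]$,
\begin{equation}
A\leqslant 1+D\left(\int_{0}^{\infty}\mathbb{P}\left((X_{-})^{\eta}>y\right)^{\delta}dy\right)^{1/\delta}.
\end{equation}

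Combining, the left-hand side is at most $C_{1}+C_{2}C_{5}^{\gamma}(|m|+C_{6}A^{1/s_{2}})^{s_{1}\gamma}$, which grows like $A^{s_{1}\gamma/s_{2}}$ as $A\to\infty$, whereas the last display forces the right-hand integral to dominate $A^{\delta}$. The crux is the exponent comparison $s_{1}\gamma/s_{2}<\delta$: as $s_{1}\downarrow\alpha/\gamma$ and $s_{2}\uparrow\eta/\delta$ the ratio $s_{1}\gamma/s_{2}$ tends to $\alpha\delta/\eta$, which is strictly less than $\delta$ precisely because $\alpha<\eta$, so for $s_{1},s_{2}$ close enough to these endpoints the strict inequality holds with room to spare. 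A routine split according to whether $A$ lies below or above a threshold $A_{0}$ then yields constants $L_{1},L_{2}$ (depending on $m$ only through $|m|$): the bounded range of $A$ is absorbed into $L_{1}$, while on the unbounded range the strictly smaller growth exponent lets $L_{2}$ times the right-hand integral dominate. I expect this growth bookkeeping --- simultaneously arranging $\alpha/\gamma<s_{1}<1<s_{2}<\eta/\delta$ together with $s_{1}\gamma/s_{2}<\delta$ --- to be the only delicate point; everything else is a mechanical use of Markov, H\"older, Jensen and Lemma \ref{suti}.
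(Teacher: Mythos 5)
Your proof is correct. Note that the paper itself does not prove Lemma \ref{moz1} at all: it is quoted (with the $\sq$ marker) from \cite{RR11}, Lemma 3.13, so you have supplied an argument the paper omits; your toolkit (Markov plus a cut at $t=1$ for the gains integral, H\"older to shuttle between $\mathbb{P}$ and $\mathbb{Q}$ using $\rho,1/\rho\in\mathscr{W}$, the budget identity $\mathbf{E}_{\mathbb{Q}}[X_+]=m+\mathbf{E}_{\mathbb{Q}}[X_-]$, and Lemma \ref{suti} to convert the loss moment into the distorted loss integral) is exactly the one used in \cite{RR11}, so this is in substance the intended proof rather than a genuinely different route. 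The one delicate point, the simultaneous choice $\alpha/\gamma<s_1<1$, $1<s_2<\eta/\delta$, $s_1\gamma/s_2<\delta$, does check out: taking $s_1=(\alpha+\epsilon)/\gamma$ and $s_2=(\eta-\epsilon)/\delta$ for small $\epsilon>0$, the constraint $s_1\gamma<\delta s_2$ reads $2\epsilon<\eta-\alpha$, which holds since $\eta>\alpha$; and your threshold split at the end is sound because whenever $A:=\mathbf{E}_{\mathbb{P}}[(X_-)^{s_2}]$ exceeds the threshold, Lemma \ref{suti} gives $A\leqslant 1+D I^{1/\delta}$ with $I$ the loss integral, so $A^{s_1\gamma/s_2}\leqslant C(1+I)$, while infinite values of $A$ force $I=\infty$ and make the claimed inequality vacuous.
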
 

\begin{lemma}\label{moz2} 
Let $a,\, b$ and $s$ be strictly positive real numbers such that
$s<a<b$ and $s\leqslant1$. Then there exist $0<\zeta<1$ and constants $R_{1},$ $R_{2}$ such that 
\begin{equation}
\int_{0}^{\infty}\mathbb{P}\left(X^{a}>y\right)^{s}dy\leqslant R_{1}+R_{2}\left[\int_{0}^{\infty}\mathbb{P}\left(X^{b}>y\right)^{s}dy\right]^{\zeta},
\end{equation}
for all non-negative random variables $X$. $\sq$
\end{lemma}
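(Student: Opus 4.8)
The plan is to collapse both sides into integrals against a single decreasing tail function and then balance a split of the domain so as to manufacture an exponent strictly below $1$. First I would exploit the elementary identity $\mathbb{P}(X^{a}>y)=\mathbb{P}(X^{b}>y^{b/a})$, valid for $X\geqslant 0$ and $y>0$, together with the substitution $w=y^{b/a}$ in the left-hand integral. Writing $h(w):=\mathbb{P}(X^{b}>w)^{s}$, which is non-negative, non-increasing and bounded by $1$ (since $0\leqslant\mathbb{P}\leqslant 1$ and $s>0$), the two quantities in the statement become
\[
\int_{0}^{\infty}\mathbb{P}(X^{a}>y)^{s}\,dy=\frac{a}{b}\int_{0}^{\infty}w^{\frac{a}{b}-1}h(w)\,dw,\qquad \int_{0}^{\infty}\mathbb{P}(X^{b}>y)^{s}\,dy=\int_{0}^{\infty}h(w)\,dw=:J.
\]
Thus the claim reduces to controlling $\int_{0}^{\infty}w^{\mu-1}h(w)\,dw$ by a sublinear power of $J$, where $\mu:=a/b\in(0,1)$ precisely because $a<b$.

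Next I would split the integral at a free threshold $T_{0}>0$. On $(0,T_{0})$ I bound $h\leqslant 1$ and integrate the pure power, so that $\int_{0}^{T_{0}}w^{\mu-1}h(w)\,dw\leqslant T_{0}^{\mu}/\mu$. On $(T_{0},\infty)$ I use $\mu-1<0$, whence $w^{\mu-1}\leqslant T_{0}^{\mu-1}$ there and $\int_{T_{0}}^{\infty}w^{\mu-1}h(w)\,dw\leqslant T_{0}^{\mu-1}\int_{T_{0}}^{\infty}h(w)\,dw\leqslant T_{0}^{\mu-1}J$. Adding the pieces gives $\int_{0}^{\infty}w^{\mu-1}h(w)\,dw\leqslant T_{0}^{\mu}/\mu+T_{0}^{\mu-1}J$ for every $T_{0}>0$. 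The decisive step is then to optimise: choosing $T_{0}=J$ equalises the orders of the two contributions and yields $\int_{0}^{\infty}w^{\mu-1}h(w)\,dw\leqslant(1/\mu+1)J^{\mu}$. Multiplying by $\mu=a/b$ produces
\[
\int_{0}^{\infty}\mathbb{P}(X^{a}>y)^{s}\,dy\leqslant(1+\mu)\Bigl[\int_{0}^{\infty}\mathbb{P}(X^{b}>y)^{s}\,dy\Bigr]^{\mu},
\]
so the assertion holds with $\zeta=a/b<1$, $R_{1}=0$ and $R_{2}=1+a/b$, the degenerate cases $J\in\{0,\infty\}$ being trivial. This exponent is sharp, as the deterministic example $X\equiv M$ shows, where the left side equals $M^{a}=(M^{b})^{a/b}$.

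The only genuinely non-routine point is the choice of split: a naive comparison $\mathbb{P}(X^{a}>\cdot)\leqslant\mathbb{P}(X^{b}>\cdot)$ on the relevant range, or bounding $w^{\mu-1}$ by a constant on a \emph{fixed} region, delivers only the linear bound $\zeta=1$; extracting a truly sublinear exponent forces $T_{0}$ to depend on $J$ and the two terms to be balanced against each other. I would also observe that this particular estimate uses only $a<b$ and $s>0$; the extra hypotheses $s\leqslant 1$ and $s<a$ do not enter, and are presumably carried along merely to match the formulation in \cite{RR11}.
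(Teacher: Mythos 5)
Your proof is correct. Note that the paper itself does not prove Lemma \ref{moz2}: it is quoted from \cite{RR11} (Lemma 3.14 there) and stated here only for completeness, so there is no in-paper argument to compare against; your proposal is a genuinely self-contained substitute. The reduction via $\mathbb{P}(X^{a}>y)=\mathbb{P}(X^{b}>y^{b/a})$ and the substitution $w=y^{b/a}$ is valid, the split of $\int_{0}^{\infty}w^{a/b-1}h(w)\,dw$ at a threshold $T_{0}$ (using $h\leqslant 1$ below and $w^{a/b-1}\leqslant T_{0}^{a/b-1}$ above) is correct, and choosing $T_{0}=J$ balances the two terms to give the bound $(1+a/b)\,J^{a/b}$; the degenerate cases are indeed harmless (for $J=0$ one can simply let $T_{0}\downarrow 0$ in your bound). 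Your argument is also sharper than the standard one: the natural alternative is to split at a \emph{fixed} point and apply H\"older's inequality with exponents $1/\zeta$ and $1/(1-\zeta)$ to $\int_{1}^{\infty}w^{a/b-1}h(w)\,dw$, which requires $\zeta\in(a/b,1)$ so that $\int_{1}^{\infty}w^{(a/b-1)/(1-\zeta)}\,dw<\infty$, whereas your $J$-dependent threshold attains the endpoint $\zeta=a/b$, shown optimal by your example $X\equiv M$. Finally, your observation that only $0<a<b$ and $s>0$ are used is accurate and not merely cosmetic: in the proof of Theorem \ref{wpa} the lemma is invoked with $a=\eta$, $b=\beta$, $s=\delta$, and Assumption \ref{p}a does not force $\delta\leqslant 1$, so a version of Lemma \ref{moz2} free of the restriction $s\leqslant 1$ (which the paper's Remark 3.6 secures only for Lemma \ref{moz1}) is exactly what that application needs; your proof supplies it.
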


\begin{remark} Note that in the paper \cite{RR11} it was assumed that $u_{\pm},w_{\pm}$ are
power functions (and not only comparable to power functions as in Assumption \ref{u} above). Furthermore, 
$\alpha,\beta,\gamma,\delta\leq 1$ were stipulated, in line with the literature.
One can check in \cite{RR11} that the proof of Lemma \ref{moz1} above goes
through without this restriction.
\end{remark}

These Lemmas allow us to prove Theorem \ref{wpa}.
\begin{proof}[Proof of Theorem \ref{wpa}.] We imitate the proof of Theorem 3.15 in \cite{RR11}.
By contradiction, let us suppose that the optimisation problem is ill-posed. Then for a sequence $\phi(n)\in \mathcal{A}(z)$, $n\in\mathbb{N}$
we have
$V_{+}\!\left([X_{T}^z({\phi(n)})-B]_{+}\right)\rightarrow +\infty$ as $n \rightarrow +\infty$. 
Note that, for any non-negative $X$,
$$
V_{+}\!\left(X\right)\leqslant \int_0^{\infty} g_+ \mathbb{P}(X^{\alpha}>(y/k_+)-1)^{\gamma}dy\leq \int_0^{\infty} g_+k_+ \mathbb{P}(X^{\alpha}>t)^{\gamma}dt+g_+k_+.
$$
Thus it follows from Lemma \ref{moz1} (with the choice $m:=z-\mathbf{E}_{\mathbb{Q}}[B]$) that
\begin{equation*}
\lim_{n\rightarrow +\infty} \int_{0}^{+\infty} \mathbb{P}\!\left([X_T^z({\phi(n)})-B]_{-}^{\eta}>y\right)^{\delta} dy =+\infty
\end{equation*}
for some $\eta$ satisfying $\eta<\beta$, $\alpha<\eta$ and $\delta<\eta$. 
Notice that
\begin{equation}\label{horhos}
V_-(X)\geqslant \int_0^{\infty} g_- \mathbb{P}(k_-X^{\beta}-k_->y)^{\delta}dy\geq \int_1^{\infty} g_- k_- \mathbb{P}(X^\beta>t)^{\delta}dt.
\end{equation}
Consequently, we can apply Lemma \ref{moz2} to conclude that also $$
\lim_{n\rightarrow +\infty} V_{-}\!\left([X_T^z({\phi(n)})-B]_{-}\right)=+\infty.
$$
Therefore, using Lemmas \ref{moz1} and \ref{moz2} again (and recalling that $0<\zeta<1$),
\begin{eqnarray*}
& & V\!\left(X^z_T({\phi(n)})-B\right)\leq g_+k_+(L_{1}+1)+g_+ k_+ L_{2} 
\int_{0}^{+\infty} \mathbb{P}\!\left(([X_T^z({\phi(n)})-B]_{-})^{\eta}>y\right)^{\delta} dy \\
&-& V_{-}\!\left([X_T^z({\phi(n)})-B]_{-}\right)\leq g_+k_+\left(L_{1}+1+ L_2 R_{1}\right)\\ &+& g_+ k_+ L_{2} R_{2}
\left[\frac{V_{-}\!\left([X_T^z({\phi(n)})-B]_{-}\right)}{g_-k_-}+1\right]^{\zeta} - V_{-}\!\left([X_T^z({\phi(n)})-B]_{-}\right) \xrightarrow[n\rightarrow +\infty]{} -\infty,
\end{eqnarray*}
which is absurd. Hence, as claimed, the problem is well-posed.
\end{proof}

We present a result about well-posedness under the alternative conditions Assumptions \ref{p}b and \ref{b}b as well. 
It is worth pointing out that while the conclusions of Theorems \ref{wpa} and \ref{wpb} are identical, the
methods for proving them are significantly different. 

\begin{theorem}\label{wpb}
Under Assumptions \ref{sna}, \ref{u}, \ref{p}b and \ref{b}b the problem is well-posed, i.e.  
\begin{equation}
\sup_{\theta\in\mathcal{A}\left(z\right)}V\left(z,\theta_{1},\ldots,\theta_{T}\right)<\infty.
\end{equation}
\end{theorem}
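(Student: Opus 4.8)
The plan is to use Assumption \ref{b}b to eliminate the benchmark, reducing everything to a benchmark-free attainable portfolio value, and then to bound the gain functional by a \emph{sublinear} function of the loss functional, the sublinearity being manufactured by the favourable integrability of the martingale measure $\mathbb{Q}$ of Lemma \ref{construct}. Concretely, fix $\theta\in\mathcal{A}(z)$ and let $\phi,b$ be as in Assumption \ref{b}b. Writing $\psi:=\theta-\phi\in\Phi$, $c:=z-b$ and $Y:=X_T^c(\psi)$, the inequality $X_T^b(\phi)\le B$ gives $X:=X_T^z(\theta)-B\le Y$; hence $X_+\le Y_+$ and $Y_-\le X_-$. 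As $V_+$ depends only on the positive part and $V_-$ only on the negative part, this yields $V_+(X)\le V_+(Y)$ and $V_-(Y)\le V_-(X)$, so that
\[
V\left(X_T^z(\theta)-B\right)\le V_+(Y)-V_-(Y).
\]
It therefore suffices to bound $V_+(Y)-V_-(Y)$ from above by a constant independent of the attainable value $Y$.

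The first step is to control $V_+(Y)$ by a $\mathbb{Q}$-moment of $Y_+$. As in the proof of Theorem \ref{wpa} one has $V_+(Y)\le g_+k_+\bigl(1+\int_0^\infty\mathbb{P}(Y_+^\alpha>t)^\gamma\,dt\bigr)$. Choosing an exponent $m\in(\alpha/\gamma,\beta)$ — the interval is nonempty by \eqref{pb} — I would estimate, for $q>1$ with conjugate $q^*$,
\[
\mathbb{P}(Y_+>s)=\mathbf{E}_\mathbb{Q}\!\left[\rho^{-1}\mathbf{1}_{\{Y_+>s\}}\right]\le\left(\mathbf{E}_\mathbb{Q}[\rho^{-q^*}]\right)^{1/q^*}\mathbb{Q}(Y_+>s)^{1/q}\le C_q\left(\mathbf{E}_\mathbb{Q}[Y_+^m]\right)^{1/q}s^{-m/q},
\]
using $1/\rho\in\mathscr{W}$ and Markov's inequality under $\mathbb{Q}$. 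Substituting $s=t^{1/\alpha}$, raising to the power $\gamma$ and integrating (which converges precisely because $q<m\gamma/\alpha$ is available for $m>\alpha/\gamma$) gives $V_+(Y)\le C\bigl(1+(\mathbf{E}_\mathbb{Q}[Y_+^m])^{\gamma/q}\bigr)$. On the loss side I would exploit $\delta\le 1$: then $w_-(p)\ge g_-p^\delta\ge g_-p$, so \eqref{horhos} improves to $V_-(Y)\ge g_-k_-\int_1^\infty\mathbb{P}(Y_-^\beta>t)\,dt\ge g_-k_-\bigl(\mathbf{E}_\mathbb{P}[Y_-^\beta]-1\bigr)$, i.e. $V_-$ dominates the $\beta$-th $\mathbb{P}$-moment of the loss.

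The crux, and the step I expect to be hardest, is the passage from the gain moment $\mathbf{E}_\mathbb{Q}[Y_+^m]$ to the loss moment $\mathbf{E}_\mathbb{Q}[Y_-^m]$. Here Assumption \ref{sna} is indispensable: for a \emph{generic} $\mathbb{Q}$-martingale no such inequality holds (gains of fixed mean can carry arbitrarily large higher moments while the losses vanish), but for an \emph{attainable} value $Y=X_T^c(\psi)$ the quantitative no-arbitrage condition \eqref{ssna} forces gains to be controlled by losses. The one-period mechanism is that, for $\xi\in\Xi_t^d$, \eqref{ssna} gives $\mathbf{E}_\mathbb{P}[((\xi\cdot\Delta S_{t+1})_-)^m\mid\mathcal{F}_t]\ge\kappa_t^m\beta_t\,|\xi|^m$, so a large position is necessarily accompanied by a commensurate one-step downside; a backward induction over $t=T,\ldots,1$, using $\Delta S_t,\rho,1/\rho\in\mathscr{W}$ to keep all increment moments and changes of measure finite, then upgrades this to the global estimate $\mathbf{E}_\mathbb{Q}[Y_+^m]\le C\bigl(1+\mathbf{E}_\mathbb{Q}[Y_-^m]\bigr)$ (essentially the strategy–moment control underlying \cite{CarRas11} and \cite{RS05}). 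The delicate point of the induction is that losses incurred early may be partially recovered later, so one must propagate the downside all the way to $T$ rather than reason step by step.

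Finally I would close the estimate. Since $m<\beta$, a Hölder step with an exponent $q'\in(1,\beta/m)$ and $\rho\in\mathscr{W}$ gives $\mathbf{E}_\mathbb{Q}[Y_-^m]=\mathbf{E}_\mathbb{P}[\rho Y_-^m]\le C\bigl(1+\mathbf{E}_\mathbb{P}[Y_-^\beta]\bigr)^{1/q'}\le C\bigl(1+V_-(Y)\bigr)^{1/q'}$. Chaining the four estimates yields
\[
V_+(Y)\le C\Bigl(1+V_-(Y)^{\gamma/(qq')}\Bigr).
\]
The decisive point is that $q,q'$ may be chosen freely subject only to $1<q<m\gamma/\alpha$ and $1<q'<\beta/m$, whose product ranges up to $\gamma\beta/\alpha>\gamma$ because $\alpha<\beta$; hence one can arrange $qq'>\gamma$, making the exponent $\gamma/(qq')$ strictly less than $1$. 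Consequently
\[
V\left(X_T^z(\theta)-B\right)\le V_+(Y)-V_-(Y)\le\sup_{v\ge 0}\Bigl(C\bigl(1+v^{\gamma/(qq')}\bigr)-v\Bigr)<\infty,
\]
a bound independent of $\theta$, which proves well-posedness. It is exactly this latitude in the Hölder exponents, granted by $\rho,1/\rho\in\mathscr{W}$, that lets the argument succeed even when $\gamma\ge 1$, where a computation carried out entirely under $\mathbb{P}$ would only deliver the useless exponent $\gamma$.
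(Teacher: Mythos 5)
Your architecture is the right one, and it is in fact the architecture behind Theorem 4.4 of \cite{CarRas11}; note that the paper's own proof of Theorem \ref{wpb} is essentially a two-line citation of that theorem, after observing that $\delta\le 1$ together with \eqref{neggi} supplies the one hypothesis of Assumption 4.1 in \cite{CarRas11} that is not assumed verbatim here. Your benchmark-elimination step via Assumption \ref{b}b (passing to $Y=X_T^{z-b}(\theta-\phi)\ge X_T^z(\theta)-B$), the power-function bounds on $V_\pm$, and the closing exponent arithmetic (choosing $m\in(\alpha/\gamma,\beta)$, then $q<m\gamma/\alpha$, $q'<\beta/m$ with $qq'>\gamma$, possible precisely because $\alpha<\beta$ and $\alpha/\gamma<\beta$) are all sound.

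The genuine gap is the step you yourself flag as the hardest: the uniform-in-strategy estimate $\mathbf{E}_{\mathbb{Q}}[Y_+^m]\le C\left(1+\mathbf{E}_{\mathbb{Q}}[Y_-^m]\right)$ for attainable $Y$. The one-period mechanism you state is correct, but the sentence ``a backward induction over $t=T,\ldots,1$ \ldots{} upgrades this to the global estimate'' is not a proof, and that upgrade is where the entire difficulty of the theorem is concentrated. The obstruction you correctly name --- losses incurred early can be recouped later --- means one must show that a large position $|\psi_t|$ at \emph{any} time forces the \emph{terminal} value to be very negative with a quantified conditional probability, uniformly over whatever the strategy does on $\{t+1,\ldots,T\}$; this is exactly the key quantitative lemma of \cite{CarRas11} (a dyadic-splitting backward induction, with repeated conditional H\"older estimates to handle the random $\kappa_t,\beta_t$ whose inverses only have moments), and it runs to several pages there. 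Moreover, the clean form with the \emph{same} exponent $m$ on both sides is stronger than what such an induction naturally yields: separating $|\psi_t|$ from $|\Delta S_t|$ and from $\kappa_t,\beta_t$ costs H\"older exponents, so one obtains bounds on $\mathbf{E}_{\mathbb{Q}}[Y_+^m]$ in terms of $\mathbf{E}_{\mathbb{Q}}[Y_-^{m'}]$ for some $m'>m$ raised to a power; this can still be made to close your chain (you have room below $\beta$), but it must be tracked, not assumed. As written, your proof either silently re-derives the technical core of \cite{CarRas11} or must cite it --- and in the latter case it reduces to the paper's own proof. Two cosmetic points: the implication $X\le Y\Rightarrow V_+(X)\le V_+(Y),\ V_-(Y)\le V_-(X)$ uses monotonicity of $u_\pm,w_\pm$, which Assumption \ref{u} does not grant (your subsequent estimates use only the power bounds, which are monotone, so this is repairable); and the closing claim that a computation under $\mathbb{P}$ ``would only deliver the useless exponent $\gamma$'' is inaccurate --- splitting the $dt$-integral at the crossing point $\left(\mathbf{E}_{\mathbb{P}}[Y_+^m]\right)^{\alpha/m}$ gives the exponent $\alpha/m<1$ under $\mathbb{P}$ as well, so the measure change is a convenience here rather than the enabling device.
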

\begin{proof} Notice that $\delta\leq 1$ and \eqref{neggi} imply the fourth inequality in Assumption 4.1 of 
\cite{CarRas11}. Hence our
result follows from Theorem 4.4 in \cite{CarRas11}. Note that in \cite{CarRas11} $\alpha,\beta,\gamma\leq 1$ were also 
assumed. As already indicated in Remark 4.2 of \cite{CarRas11}, the proofs go through 
without this restriction.\end{proof}

From now on, the existence of optimal strategies will be our main concern. We will need to assume that the filtration
is rich enough in the sense of Assumption \ref{fried} below. This Assumption means that investors 
randomize their strategies or, from a mathematical point of view, that we enlarge the underlying probability space.
We will comment on this in section \ref{adde} as well.

\begin{assumption}\label{fried}
Define $\mathcal{G}_0=\{\emptyset,\Omega\}$, and $\mathcal{G}_t=\sigma(Z_1,\ldots,Z_t)$ for $1\leq t\leq T$,
where the $Z_i$, $i=1,\ldots, T$ are $\mathbb{R}^N$-valued
independent random variables. $S_0$ is constant, $\Delta S_t$ is $\mathcal{G}_t$-adapted and $B$ is
$\mathcal{G}_T$-measurable.

Furthermore, $\mathcal{F}_t=\mathcal{G}_t\vee\mathcal{F}_0$, $t\geq 0$,
where $\mathcal{F}_0=\sigma({\varepsilon})$ with ${\varepsilon}$
uniformly distributed on $[0,1]$ and independent of $(Z_1,\ldots,Z_T)$.
\end{assumption}

\begin{remark} The above Assumption clearly implies that 
$\Delta S_{t}=f^{\left(t\right)}\left(Z_{1},...,Z_{t}\right)$ for some Borel functions $f^{(t)}$, for all
$t$, and $B=g_B\left(Z_{1},\ldots,Z_{T}\right)$ for some Borel function $g_B$.
We may and will suppose without loss of generality that each of the
$Z_{i}$ is bounded.
\end{remark}

In \cite{RR11} the existence of optimal strategies was shown under Assumption \ref{p}a 
(and $B\in L^1(\mathbb{Q})$ for some reference probability $\mathbb{Q}\in\mathcal{M}^e(S)$) 
in 
a (narrow) class of continuous-time
models. In \cite{CarRas11} existence was shown under Assumptions \ref{p}b, \ref{b}b and \ref{fried} in discrete-time models
assuming also the continuity of $f^{(t)},g_B$. In the present paper we shall
prove existence of an optimiser in discrete-time models under Assumption \ref{fried} and either Assumption \ref{p}a or Assumption \ref{p}b,
and we do not need continuity of $f^{(t)},g_B$. We first present some preparatory results.

\begin{proposition}\label{tajt} Let Assumptions \ref{sna}, \ref{u}, \ref{p}a and \ref{b}a hold and take $\mathbb{Q}\sim\mathbb{P}$ as constructed in Lemma 
\ref{construct}. Further, suppose that a sequence of trading strategies $\left\{ \theta^{n}\right\} \subset\mathcal{A}\left(z\right)$ 
satisfies
\begin{equation}
\sup_{n}V_{-}\left(z,\theta_{1}^{n},\ldots,\theta_{T}^{n}\right)<\infty.\label{eq:V- hypothesis}
\end{equation}
Then there exists $\pi>1$ such that 
\begin{equation}
\sup_{n}\mathbf{E}_{\mathbb{Q}}\left(X_{T}^{z}\left(\theta^{n}\right)\right)_{-}^{\pi}<\infty,\label{eq:xT minus}
\end{equation}
and 
\begin{equation}
\sup_{n}\mathbf{E}_{\mathbb{Q}}\left(X_{T}^{z}\left(\theta^{n}\right)\right)_{+}<\infty.\label{eq:xT plus}
\end{equation}

It follows also that  
\begin{equation}
\sup_{n}\mathbf{E}_{\mathbb{Q}}\left[\sup_{t\leqslant T}\left(X_{t}^{z}\left(\theta^{n}\right)\right)_{-}^{\pi}\right]<\infty,\label{eq:xT minus sup}
\end{equation}
\begin{equation}
\sup_{n,t}\mathbf{E}_{\mathbb{Q}}\left[\left|X_{t}^{z}\left(\theta^{n}\right)\right|\right]<\infty.\label{eq:xt abs bound}
\end{equation}
\end{proposition}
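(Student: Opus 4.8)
The overall plan is to convert the uniform bound on $V_-$ into a uniform $L^\pi(\mathbb{Q})$-bound on $(X_T^z(\theta^n))_-$ for some $\pi>1$, to deduce the bound on the positive part from the martingale property of wealth under $\mathbb{Q}$, and then to lift both bounds to all times $t\le T$ by convexity and Doob's inequality. For \eqref{eq:xT minus}, set $W^n:=(X_T^z(\theta^n)-B)_-$. Exactly as in the derivation of \eqref{horhos}, the inequalities $w_-(p)\ge g_-p^\delta$ and $u_-(x)\ge k_-(x^\beta-1)$ give $V_-(z,\theta^n)\ge g_-k_-\int_1^\infty\mathbb{P}((W^n)^\beta>t)^\delta\,dt$, so \eqref{eq:V- hypothesis} makes $\sup_n\int_0^\infty\mathbb{P}((W^n)^\beta>t)^\delta\,dt$ finite (the part over $[0,1]$ costing at most $1$). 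Since \eqref{pa} gives $\beta/\delta>1$, I fix $\pi$ with $1<\pi<\min\{\beta/\delta,\,1+r\}$ and then $p_2>1$ with $\pi p_2<\beta/\delta$. Lemma \ref{suti} (with $b=\beta$, $a=\delta$, $s=\pi p_2$) bounds $\sup_n\mathbf{E}_{\mathbb{P}}(W^n)^{\pi p_2}$, and H\"older with conjugate exponent $p_1$ together with $\rho=d\mathbb{Q}/d\mathbb{P}\in\mathscr{W}$ then bounds $\mathbf{E}_{\mathbb{Q}}(W^n)^\pi=\mathbf{E}_{\mathbb{P}}[\rho(W^n)^\pi]$ uniformly in $n$. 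As $(X_T^z(\theta^n))_-\le W^n+B_-$ and $B\in L^{1+\epsilon}(\mathbb{Q})$ for every $\epsilon<r$ (the remark after Lemma \ref{construct}), the choice $\pi<1+r$ yields \eqref{eq:xT minus}.

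The crux is \eqref{eq:xT plus}, which I would derive from the fact that $X^z(\theta^n)$ is a genuine $\mathbb{Q}$-martingale. Since $\Delta S_k\in\mathscr{W}$ and $\rho\in\mathscr{W}$, H\"older gives $\Delta S_k\in L^1(\mathbb{Q})$, so $S$ is a $\mathbb{Q}$-martingale, while \eqref{eq:xT minus} gives $(X_T^z(\theta^n))_-\in L^1(\mathbb{Q})$. The delicate point is that a $\mathbb{Q}$-martingale transform with $\mathbb{Q}$-integrable terminal negative part is a true $\mathbb{Q}$-martingale even though the predictable integrands $\theta^n_k$ need not be integrable a priori; this is exactly where Assumption \ref{sna} intervenes, the estimate $\kappa_t\beta_t|\theta^n_{t+1}|\le\mathbf{E}_{\mathbb{P}}[(\theta^n_{t+1}\cdot\Delta S_{t+1})_-\mid\mathcal{F}_t]$ obtained from \eqref{ssna} (with $1/\kappa_t,1/\beta_t\in\mathscr{W}$) allowing one to control the strategies by the wealth and thereby secure integrability by backward induction. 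Granting the martingale property, $\mathbf{E}_{\mathbb{Q}}X_T^z(\theta^n)=z$, so $\mathbf{E}_{\mathbb{Q}}(X_T^z(\theta^n))_+=z+\mathbf{E}_{\mathbb{Q}}(X_T^z(\theta^n))_-$, which is bounded uniformly in $n$ by \eqref{eq:xT minus}; this is \eqref{eq:xT plus}.

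Finally, \eqref{eq:xT minus sup} and \eqref{eq:xt abs bound} follow readily from the martingale property. Because $x\mapsto x_-$ is convex, $(X_t^z(\theta^n))_-$ is a non-negative $\mathbb{Q}$-submartingale, so Doob's $L^\pi$ maximal inequality ($\pi>1$) gives $\mathbf{E}_{\mathbb{Q}}[\sup_{t\le T}(X_t^z(\theta^n))_-^\pi]\le(\pi/(\pi-1))^\pi\,\mathbf{E}_{\mathbb{Q}}(X_T^z(\theta^n))_-^\pi$, which is \eqref{eq:xT minus sup} via \eqref{eq:xT minus}. Similarly $|X_t^z(\theta^n)|$ is a $\mathbb{Q}$-submartingale, whence $\mathbf{E}_{\mathbb{Q}}|X_t^z(\theta^n)|\le\mathbf{E}_{\mathbb{Q}}|X_T^z(\theta^n)|=\mathbf{E}_{\mathbb{Q}}(X_T^z(\theta^n))_++\mathbf{E}_{\mathbb{Q}}(X_T^z(\theta^n))_-$ for all $t\le T$, giving \eqref{eq:xt abs bound} via \eqref{eq:xT plus} and \eqref{eq:xT minus}. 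I expect the martingale-transform step of the second paragraph to be the only substantial difficulty; everything else is routine manipulation with H\"older, Lemma \ref{suti}, and Doob's inequality.
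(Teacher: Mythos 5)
Your proposal is correct and takes essentially the same route as the paper's proof: H\"older's inequality combined with Lemma \ref{suti} and \eqref{horhos} yields \eqref{eq:xT minus}, the $\mathbb{Q}$-martingale property of the wealth process yields \eqref{eq:xT plus}, and convexity plus Doob's inequality give \eqref{eq:xT minus sup} and \eqref{eq:xt abs bound}. The only difference is that the ``delicate point'' you sketch --- that a martingale transform with $\mathbb{Q}$-integrable terminal negative part is a true $\mathbb{Q}$-martingale --- is precisely Theorem 2 of \cite{JS98}, which the paper cites outright; this fact holds for any martingale transform in finite discrete time and needs neither Assumption \ref{sna} nor your backward-induction argument.
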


\begin{proof} This is a direct application of Lemma \ref{moz2}. Indeed, choose
$1<s<\frac{\beta}{\delta}$ and $\lambda$ such that $1<\lambda<s<\frac{\beta}{\delta}$.
Applying H\"older's inequality,
\begin{eqnarray*}
\mathbf{E}_{\mathbb{Q}}\left[\left(X_{T}^{z}\left(\theta^{n}\right)-B\right)_{-}^{\frac{s}{\lambda}}\right]=&
\mathbf{E}_{\mathbb{Q}}\left[\rho^{1/\lambda}\frac{1}{\rho^{1/\lambda}}\left(X_{T}^{z}\left(\theta^{n}\right)-B\right)_{-}^{\frac{s}{\lambda}}\right]&\leqslant\\
&\leqslant C\mathbf{E}_{\mathbb{P}}\left[\left(X_{T}^{z}\left(\theta^{n}\right)-B\right)_{-}^{s}\right]^{1/\lambda},&
\end{eqnarray*}
where $C=\mathbf{E}_{\mathbb{Q}}\left[\rho^{q/\lambda}\right]^{1/q}<\infty$ and $q$ is the conjugate number of $\lambda$. 
Lemma \ref{suti} yields that, for all $n$, 
\begin{equation}
C\mathbf{E}_{\mathbb{P}}\left[\left(X_{T}^{z}\left(\theta^{n}\right)-B\right)_{-}^{s}\right]^{1/\lambda}\leqslant C
\left(1+D\left(\int_{0}^{\infty}\mathbb{P}\left(\left(X_{T}^z\left(\theta^{n}\right)-B\right)_{-}^{\beta}>y\right)^{\delta}dy\right)^{1/\delta}\right)^{1/\lambda}
\end{equation} 
for some $D<\infty$. Hence \eqref{eq:V- hypothesis} and \eqref{horhos}
imply (\ref{eq:xT minus}), setting $\pi:=\min\{\frac{s}{\lambda},1+\frac{r}{2}\}$ (note that, as we have pointed
out after Lemma \ref{construct}, $\mathbf{E}_{\mathbb{Q}}|B|^{1+(r/2)}<\infty$). 
Moreover, H\"older's inequality gives $$
\sup_{n}\mathbf{E}_{\mathbb{Q}}\left(X_{T}^{z}\left(\theta^{n}\right)\right)_{-}<\infty.$$ 
It follows from 
Theorem 2 in \cite{JS98}  that $\left\{ X^{z}_{t}\left(\theta^{n}\right)\right\} _{t\leqslant T}$ is a martingale under $\mathbb{Q}$, 
thus $$
\mathbf{E}_{\mathbb{Q}}\left|X^z_{t}\left(\theta^{n}\right)\right|\leqslant\mathbf{E}_{\mathbb{Q}}\left|X_{T}^z\left(\theta^{n}\right)\right|,
$$
for all $n,t$. From $\mathbf{E}_{\mathbb{Q}}\left[X_{t}^z\left(\theta^{n}\right)\right]=z$
and (\ref{eq:xT minus}) we have 
\[
\sup_{n}\mathbf{E}_{\mathbb{Q}}\left(X_{T}^z\left(\theta^{n}\right)\right)_{+}\leqslant |z|+\sup_{n}\mathbf{E}_{\mathbb{Q}}\left(X_{T}^z\left(\theta^{n}\right)\right)_{-}<\infty
\]
Hence $\sup_{n}\mathbf{E}_{\mathbb{Q}}\left|X_{T}\left(\theta^{n}\right)\right|<\infty$
and this implies \eqref{eq:xT plus} as well as (\ref{eq:xt abs bound}).
In order to prove (\ref{eq:xT minus sup}), Doob's inequality is applied, noting that $f\left(x\right)=x_{-}$ is convex and hence the process 
$\{\left(X_{t}^z\left(\theta^{n}\right)\right)_{-}\}_{t\leqslant T}$ is a positive submartingale. 
\end{proof}

Notice that we could show Proposition \ref{tajt} only in a discrete time and finite horizon setting since it relies 
on Theorem 2 of \cite{JS98} which
fails in more general (e.g. continuous-time)  settings. 

\begin{remark}
In \cite{RR11} admissible strategies $\theta$ were required to satisfy \emph{both} $V_-(X_T^z(\theta))<\infty$ and 
the martingale property 
for $X^z_t(\theta)$ (under some fixed $\mathbb{Q}\in\mathcal{M}^e(S)$). 
The proof above shows that, in the present discrete-time setting, $V_-(X_T^z(\theta))<\infty$ \emph{implies}
the martingale property for $X^z_t(\theta)$ under $\mathbb{Q}$. So the domain of optimisation in the present
paper is the same as the one in \cite{RR11}.
\end{remark}

\begin{remark}\label{royal} Let $\theta=\left(\theta_{1},\theta_{2},\ldots,\theta_{T}\right)\in\mathcal{A}\left(z\right)$ be as in Proposition \ref{tajt}.
Clearly, 
\begin{equation}
\left(\theta_{T}\cdot\Delta S_{T}\right)_{+}\leqslant\left(X_{T}^z\left(\theta\right)\right)_{+}+\left(X_{T-1}^z\left(\theta\right)\right)_{-}.
\end{equation}

Thus 
\[
\mathbf{E}_{\mathbb{Q}}\left(\theta_{T}\cdot\Delta S_{T}\right)_{+}\leqslant\mathbf{E}_{\mathbb{Q}}\left(X_{T}^z\left(\theta\right)\right)_{+}+
\mathbf{E}_{\mathbb{Q}}\left(X_{T-1}^z\left(\theta\right)\right)_{-}
\]
implies
\begin{equation}
\sup_{n}\mathbf{E}_{\mathbb{Q}}\left[\left(\theta_{T}^{n}\cdot\Delta S_{T}\right)_{+}\right]<\infty.\label{eq:remark3-1}
\end{equation}
\end{remark}

We will now proceed to proving that trading
strategies satisfying \eqref{eq:V- hypothesis} have some uniformly bounded moments.

\begin{lemma}\label{tajtt} Let $\left\{ \theta^{n}\right\} \subset\mathcal{A}\left(z\right)$
be a sequence of trading strategies. Let Assumptions \ref{sna}, \ref{u}, \ref{p}a and \ref{b}a hold and assume that 
\eqref{eq:V- hypothesis} holds. Then 
\begin{equation}
\sup_{n}\mathbf{E}_{\mathbb{Q}}\left|\theta_{t}^{n}\right|^{1/2}<\infty\,\,\mbox{for }\, t=1,2,\ldots,\, T.
\end{equation}
\end{lemma}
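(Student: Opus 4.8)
The plan is to bound $|\theta_t^n|$ pointwise by a conditional expectation extracted from the quantitative no-arbitrage condition of Assumption \ref{sna}, and then to integrate this bound against $\mathbb{Q}$, exploiting the moment estimates of Proposition \ref{tajt} together with the integrability $\rho,1/\rho,1/\kappa_{t-1},1/\beta_{t-1}\in\mathscr{W}$. Fix $t$ and apply \eqref{ssna} with $\xi=\theta_t^n\in\Xi_{t-1}^d$. On the event $A_n=\{\theta_t^n\cdot\Delta S_t\leqslant-\kappa_{t-1}|\theta_t^n|\}$, which by \eqref{ssna} satisfies $\mathbb{P}(A_n\,|\,\mathcal{F}_{t-1})\geqslant\beta_{t-1}$, we have $(\theta_t^n\cdot\Delta S_t)_-\geqslant\kappa_{t-1}|\theta_t^n|$. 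Taking $\mathbb{P}$-conditional expectation and pulling out the $\mathcal{F}_{t-1}$-measurable factor $\kappa_{t-1}|\theta_t^n|$ yields the almost sure bound
\begin{equation*}
\kappa_{t-1}\beta_{t-1}|\theta_t^n|\leqslant\mathbf{E}_{\mathbb{P}}\!\left[(\theta_t^n\cdot\Delta S_t)_-\,\big|\,\mathcal{F}_{t-1}\right].
\end{equation*}
Since $\theta_t^n\cdot\Delta S_t=X_t^z(\theta^n)-X_{t-1}^z(\theta^n)$, the elementary inequality $(\theta_t^n\cdot\Delta S_t)_-\leqslant(X_{t-1}^z(\theta^n))_++(X_t^z(\theta^n))_-$ gives
\begin{equation*}
\kappa_{t-1}\beta_{t-1}|\theta_t^n|\leqslant(X_{t-1}^z(\theta^n))_++\mathbf{E}_{\mathbb{P}}\!\left[(X_t^z(\theta^n))_-\,\big|\,\mathcal{F}_{t-1}\right].
\end{equation*}

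Next I would take square roots, use $(a+b)^{1/2}\leqslant a^{1/2}+b^{1/2}$, and then take $\mathbf{E}_{\mathbb{Q}}$, splitting the result into two terms. For the term carrying $(X_{t-1}^z(\theta^n))_+^{1/2}$, the Cauchy--Schwarz inequality under $\mathbb{Q}$ bounds it by $\big(\mathbf{E}_{\mathbb{Q}}\tfrac{1}{\kappa_{t-1}\beta_{t-1}}\big)^{1/2}\big(\mathbf{E}_{\mathbb{Q}}(X_{t-1}^z(\theta^n))_+\big)^{1/2}$; the first factor is finite because $\rho$ and $1/(\kappa_{t-1}\beta_{t-1})$ both lie in $\mathscr{W}$, and the second is uniformly bounded by \eqref{eq:xt abs bound}. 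The point of taking the square root \emph{before} integrating is precisely that it lets the mere first-moment bound \eqref{eq:xt abs bound} suffice for this $(\cdot)_+$ term, for which no higher integrability is available.

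The main obstacle is the remaining term, $\mathbf{E}_{\mathbb{Q}}\big[(\tfrac{1}{\kappa_{t-1}\beta_{t-1}})^{1/2}\,\mathbf{E}_{\mathbb{P}}[(X_t^z(\theta^n))_-\,|\,\mathcal{F}_{t-1}]^{1/2}\big]$, because it mixes a $\mathbb{Q}$-expectation with a $\mathbb{P}$-conditional expectation, so the tower property is not directly applicable. After Cauchy--Schwarz under $\mathbb{Q}$ it suffices to bound $\mathbf{E}_{\mathbb{Q}}\mathbf{E}_{\mathbb{P}}[(X_t^z(\theta^n))_-\,|\,\mathcal{F}_{t-1}]$. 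Writing $\rho_{t-1}=\mathbf{E}_{\mathbb{P}}[\rho\,|\,\mathcal{F}_{t-1}]$ and using that the inner variable is $\mathcal{F}_{t-1}$-measurable, this equals $\mathbf{E}_{\mathbb{P}}[\rho_{t-1}(X_t^z(\theta^n))_-]=\mathbf{E}_{\mathbb{Q}}\big[\tfrac{\rho_{t-1}}{\rho}(X_t^z(\theta^n))_-\big]$. Here I would invoke the higher integrability exponent $\pi>1$ furnished by \eqref{eq:xT minus sup}: Hölder with exponents $\pi$ and $\pi'=\pi/(\pi-1)$ bounds this by $\big(\mathbf{E}_{\mathbb{Q}}(\rho_{t-1}/\rho)^{\pi'}\big)^{1/\pi'}\big(\mathbf{E}_{\mathbb{Q}}(X_t^z(\theta^n))_-^{\pi}\big)^{1/\pi}$, the second factor being uniformly finite by \eqref{eq:xT minus sup}.

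It remains to check that the density factor $\mathbf{E}_{\mathbb{Q}}(\rho_{t-1}/\rho)^{\pi'}=\mathbf{E}_{\mathbb{P}}\big[\rho^{1-\pi'}\rho_{t-1}^{\pi'}\big]$ is finite. This follows since $1/\rho\in\mathscr{W}$ forces $\rho^{1-\pi'}=(1/\rho)^{\pi'-1}\in\mathscr{W}$, while conditional Jensen and $\rho\in\mathscr{W}$ give $\rho_{t-1}\in\mathscr{W}$, hence $\rho_{t-1}^{\pi'}\in\mathscr{W}$; a product of two $\mathscr{W}$-variables is integrable by Cauchy--Schwarz. Collecting the two uniformly bounded contributions then yields $\sup_n\mathbf{E}_{\mathbb{Q}}|\theta_t^n|^{1/2}<\infty$ for each $t=1,\ldots,T$, which is the assertion of the lemma. (For $t=1$ one has $X_0^z(\theta^n)=z$ constant, so the argument applies verbatim.)
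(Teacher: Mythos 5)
Your proof is correct, but it follows a genuinely different route from the paper's. The paper applies \eqref{ssna} in the direction $-\theta_t^n/|\theta_t^n|$, producing an event on which the \emph{gain} $\langle\theta_t^n,\Delta S_t\rangle$ exceeds $\kappa_{t-1}|\theta_t^n|$, and then works entirely at the level of $\mathbb{Q}$-expectations: it converts the resulting $\mathbb{Q}$-conditional probability into a $\mathbb{P}$-conditional one via the Bayes formula $\mathbf{E}_{\mathbb{Q}}[\eta|\mathcal{G}]=\mathbf{E}_{\mathbb{P}}[\eta\rho|\mathcal{G}]/\mathbf{E}_{\mathbb{P}}[\rho|\mathcal{G}]$, lower-bounds $\mathbf{E}_{\mathbb{P}}[\mathbf{1}_{A}\rho_{T}|\mathcal{F}_{T-1}]$ by conditional Cauchy--Schwarz to bring in $\beta_{t-1}^{2}/\mathbf{E}_{\mathbb{P}}[\rho_t^{-1}|\mathcal{F}_{t-1}]$, and finishes with one more Cauchy--Schwarz, so that only the first-moment bounds \eqref{eq:xt abs bound} of Proposition \ref{tajt} are needed. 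You instead take $\xi=\theta_t^n$ itself (the \emph{loss} direction) and extract the pointwise a.s.\ bound $\kappa_{t-1}\beta_{t-1}|\theta_t^n|\leqslant(X_{t-1}^z(\theta^n))_++\mathbf{E}_{\mathbb{P}}[(X_t^z(\theta^n))_-|\mathcal{F}_{t-1}]$, which is arguably more transparent; the price is the measure-mixing term $\mathbf{E}_{\mathbb{Q}}\mathbf{E}_{\mathbb{P}}[\,\cdot\,|\mathcal{F}_{t-1}]$, which you correctly resolve by rewriting it as $\mathbf{E}_{\mathbb{Q}}\bigl[\tfrac{\rho_{t-1}}{\rho}(X_t^z(\theta^n))_-\bigr]$ and decoupling via H\"older with exponent $\pi$. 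This is where your argument genuinely needs the stronger conclusion \eqref{eq:xT minus sup} of Proposition \ref{tajt} (some $\pi>1$ moments of the negative parts), whereas the paper's proof only consumes \eqref{eq:xt abs bound}; since Proposition \ref{tajt} supplies both, your proof is complete as written, and all the integrability checks ($\rho^{1-\pi'}\rho_{t-1}^{\pi'}$ integrable, $\rho/(\kappa_{t-1}\beta_{t-1})$ integrable) are sound.
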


\begin{proof} Brackets $\langle\cdot,\cdot\rangle$ are used to denote scalar product in this proof. 
A uniform bound for $\mathbf{E}_{\mathbb{Q}}\left[\left(\theta_{T}^{n}\cdot\Delta S_{T}\right)_{+}\right]$
can obtained as in Remark \ref{royal}. Using the same idea for $t\leqslant T$, 

\begin{eqnarray}\nonumber
\mathbf{E}_{\mathbb{Q}}\left\langle\theta^n_{t},\Delta S_{t}\right\rangle_{+}&\leqslant\mathbf{E}_{\mathbb{Q}}\left(X_{t}^z\left(\theta^n\right)\right)_{+}+
\mathbf{E}_{\mathbb{Q}}\left(X_{t-1}^z\left(\theta^n\right)\right)_{-}\leqslant\\
\label{qwertz} &\leqslant \mathbf{E}_{\mathbb{Q}}\left|X_{t}^z\left(\theta^n\right)\right|+
\mathbf{E}_{\mathbb{Q}}\left|X_{t-1}^z\left(\theta^n\right)\right|,\label{eq:theta S_t}
\end{eqnarray}
and the right-hand side is bounded uniformly in $n$ by Proposition \ref{tajt}.

Denote $\rho_t:=\mathbf{E}_{\mathbb{P}}[\rho |\mathcal{F}_t]$ for $0\leq t\leq T$. From Assumption \ref{sna}, 
\begin{eqnarray}
\mathbf{E}_{\mathbb{Q}}\left\langle\theta_{T}^n,\Delta S_{T}\right\rangle_{+} &\geqslant& \mathbf{E}_{\mathbb{Q}}\left[\left|\theta_{T}^n\right|\left\langle
\frac{\theta^n_{T}}{\left|\theta^n_{T}\right|},\Delta S_{T}\right\rangle\mathbf{1}_{\left\{\left\langle\frac{\theta_{T}^n}
{\left|\theta_{T}^n\right|},\Delta S_{T}\right\rangle\geqslant\kappa_{T-1}\right\}}\right]\geqslant\\
&\geqslant&\mathbf{E}_{\mathbb{Q}}\left[\left|\theta^n_{T}\right|\kappa_{T-1}\mathbb{Q}\left(\left.\left\langle
\frac{\theta^n_{T}}{\left|\theta_{T}^n\right|},\Delta S_{T}\right\rangle\geqslant\kappa_{T-1}\right|\mathcal{F}_{T-1}\right)\right].
\end{eqnarray}
Using the property 
\[
\mathbf{E}_{\mathbb{Q}}\left[\left.\eta\right|\mathcal{G}\right]=\mathbf{E}_{\mathbb{P}}\left[\left.\eta\frac{d\mathbb{Q}}{d\mathbb{P}}\right|\mathcal{G}\right]/\mathbf{E}_{\mathbb{P}}\left[\left.\frac{d\mathbb{Q}}{d\mathbb{P}}\right|\mathcal{G}\right],
\]
of conditional expectations which holds for any sigma-algebra $\mathcal{G}$ and for any positive random
variable $\eta$, we get
\begin{equation}\label{gex}
\mathbf{E}_{\mathbb{Q}}\left\langle\theta^n_{T},\Delta S_{T}\right\rangle_{+}\geqslant
\mathbf{E}_{\mathbb{Q}}\left[\frac{\left|\theta^n_{T}\right|}{\rho_{T-1}}\kappa_{T-1}
\mathbf{E}_{\mathbb{P}}\left[\left.\mathbf{1}_{\left\{\left\langle\frac{\theta_{T}^n}{\left|\theta_{T}^n\right|},
\Delta S_{T}\right\rangle\geqslant\kappa_{T-1}\right\}}\rho_{T}\right|\mathcal{F}_{T-1}\right]\right].
\end{equation}
Denote 
$A_T=\left\{\left\langle\frac{\theta_{T}^n}{\left|\theta_{T}^n\right|},\Delta S_{T}\right\rangle\geqslant\kappa_{T-1}\right\}$ and 
apply the (conditional) Cauchy inequality to the right-hand side:
\[
\mathbf{E}_{\mathbb{P}}\left[\left.\mathbf{1}_{A_T}\rho_{T}\right|\mathcal{F}_{T-1}\right] \geqslant \mathbb{P}^{2}\left(A_T\left|\mathcal{F}_{T-1}\right.\right)\left/\mathbf{E}_{\mathbb{P}}\left[\left.\frac{1}{\rho_{T}}\right|\mathcal{F}_{T-1}\right]\right.
.
\]
From Assumption \ref{sna} and Cauchy's inequality, the right-hand side of \eqref{gex} can be minorised by 
\begin{multline}\label{E:m2}
\mathbf{E}_{\mathbb{Q}}\left[\frac{\left|\theta_{T}^n\right|}{\rho_{T-1}}\kappa_{T-1}\beta_{T-1}^{2}\left/\mathbf{E}_{\mathbb{P}}\left[\left.\rho_{T}^{-1}\right|\mathcal{F}_{T-1}\right]\right.\right]\geqslant\\ \hspace{4.5 cm}
\shoveright{\mathbf{E}_{\mathbb{Q}}\left[\left|\theta_{T}^n\right|^{1/2}\right]^2\left/\mathbf{E}_{\mathbb{Q}}\left[\rho_{T-1}\kappa_{T-1}^{-1}\beta_{T-1}^{-2}\mathbf{E}_{\mathbb{P}}\left[\left.\rho_{T}^{-1}\right|\mathcal{F}_{T-1}\right]\right]\right.,}
\end{multline}
thus 
\[
\mathbf{E}_{\mathbb{Q}}\left\langle\theta_{T},\Delta S_{T}\right\rangle_{+}\mathbf{E}_{\mathbb{Q}}\left[\rho_{T-1}\kappa_{T-1}^{-1}\beta_{T-1}^{-2}\mathbf{E}_{\mathbb{P}}\left[\left.\rho_{T}^{-1}\right|\mathcal{F}_{T-1}\right]\right]\geqslant\mathbf{E}_{\mathbb{Q}}\left[\left|\theta_{T}\right|^{1/2}\right]^2.
\]
The same procedure applies to $\theta_{t}$, $t=1,\ldots, T$ by (\ref{eq:theta S_t}).
Thus, for all $t$,
\[
\sup_{n}\mathbf{E}_{\mathbb{Q}}\left[\left|\theta_{t}^{n}\right|^{1/2}\right]\leqslant
\sup_{n}\left[\mathbf{E}_{\mathbb{Q}}\left\langle\theta_{t}^{n},\Delta S_{t}\right\rangle_{+}
\mathbf{E}_{\mathbb{Q}}\left[\rho_{t-1}\kappa_{t-1}^{-1}\beta_{t-1}^{-2}\mathbf{E}_{\mathbb{P}}\left[\left.\rho_{t}^{-1}\right|\mathcal{F}_{t-1}\right]\right]\right]^{1/2}
<\infty,
\]
by Assumption \ref{sna} and \eqref{qwertz}.\end{proof}
 
\begin{remark}
Applying H\"older's inequality, the estimates above can be carried out with no significant alteration for any $0<\xi<1$, i.e. 
\begin{equation}
\sup_{n}\mathbf{E}_{\mathbb{Q}}\left|\theta_{t}^{n}\right|^{\xi}<\infty\,\,\mbox{for }\, t=1,2,\ldots,\, T
\end{equation}
can be shown. For simplicity we did this only for $\xi=\frac{1}{2}$.
 \end{remark}

From the last Lemma the next one follows trivially.
\begin{lemma}\label{rela} Under the Assumptions \ref{sna}, \ref{u}, \ref{p}a and \ref{b}a, 
let $\left\{ \theta^{n}\right\}_{n\geqslant1} \subset\mathcal{A}\left(z\right)$ a sequence of admissible trading strategies such that $\sup_{n}V_{-}\left(z,\theta_{1}^{n},\ldots\theta_{T}^{n}\right)<\infty$.
Then $\left\{ \theta{}^{n}\right\}_{n\geqslant1} $ is a tight sequence of $\mathbb{R}^{dT}-$valued
random variables on the probability space $(\Omega,\mathcal{F},\mathbb{P})$.\hfill $\sq$
\end{lemma}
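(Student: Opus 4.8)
The plan is to read off tightness directly from the uniform fractional-moment bound $\sup_n \mathbf{E}_{\mathbb{Q}}|\theta_t^n|^{1/2}<\infty$ supplied by Lemma \ref{tajtt}, first establishing tightness with respect to $\mathbb{Q}$ and then transferring it to $\mathbb{P}$ by means of the favourable integrability of the density $1/\rho=d\mathbb{P}/d\mathbb{Q}$ granted by Lemma \ref{construct}. Since $\mathbb{R}^{dT}$ is finite-dimensional, tightness of the vector $\theta^n=(\theta_1^n,\ldots,\theta_T^n)$ amounts to a uniform tail bound $\sup_n\mathbb{P}(|\theta^n|>M)\to 0$ as $M\to\infty$, which I will reduce to coordinatewise statements by a union bound over the finitely many times $t=1,\ldots,T$.

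First I would apply Markov's inequality under $\mathbb{Q}$ to each coordinate: for every $M>0$,
\[
\mathbb{Q}(|\theta_t^n|>M)\leqslant M^{-1/2}\,\mathbf{E}_{\mathbb{Q}}|\theta_t^n|^{1/2}\leqslant M^{-1/2}\sup_n\mathbf{E}_{\mathbb{Q}}|\theta_t^n|^{1/2},
\]
and the right-hand side tends to $0$ uniformly in $n$ by Lemma \ref{tajtt}. Summing over $t=1,\ldots,T$ then shows that $\{\theta^n\}$ is tight with respect to $\mathbb{Q}$, i.e. for every $\varepsilon>0$ there is an $M$ with $\sup_n\mathbb{Q}(|\theta^n|>M)<\varepsilon$.

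The second step is the passage from $\mathbb{Q}$ to $\mathbb{P}$, which is the only point that needs genuine input. Writing $\mathbb{P}(A)=\mathbf{E}_{\mathbb{Q}}[\rho^{-1}\mathbf{1}_A]$ and applying Hölder's inequality under $\mathbb{Q}$ with some conjugate pair $(p,q)$, $p>1$,
\[
\mathbb{P}(|\theta_t^n|>M)=\mathbf{E}_{\mathbb{Q}}\!\left[\rho^{-1}\mathbf{1}_{\{|\theta_t^n|>M\}}\right]\leqslant \mathbf{E}_{\mathbb{Q}}\!\left[\rho^{-p}\right]^{1/p}\,\mathbb{Q}(|\theta_t^n|>M)^{1/q}.
\]
Here the constant $\mathbf{E}_{\mathbb{Q}}[\rho^{-p}]^{1/p}=\mathbf{E}_{\mathbb{P}}[(1/\rho)^{p-1}]^{1/p}$ is finite, because $1/\rho\in\mathscr{W}$ by Lemma \ref{construct}, and it does not depend on $n$. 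Combining this with the $\mathbb{Q}$-tightness of the first step yields $\sup_n\mathbb{P}(|\theta_t^n|>M)\to 0$ as $M\to\infty$ for each $t$, and a final union bound over $t=1,\ldots,T$ gives $\sup_n\mathbb{P}(|\theta^n|>M)\to 0$, which is exactly tightness of $\{\theta^n\}$ in $\mathbb{R}^{dT}$ on $(\Omega,\mathcal{F},\mathbb{P})$.

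The main (and essentially only) obstacle is that tightness is not automatically preserved under an equivalent change of measure, so some control on $d\mathbb{P}/d\mathbb{Q}$ is indispensable; the property $1/\rho\in\mathscr{W}$ is precisely what makes the Hölder estimate above uniform in $n$. Everything else — that a uniform fractional moment forces uniform tail decay, and that joint tightness of a finite vector reduces to coordinatewise tightness — is elementary, which is why the result follows essentially at once from Lemma \ref{tajtt}.
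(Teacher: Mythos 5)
Your proof is correct and takes essentially the paper's route: the paper deduces this lemma ``trivially'' from the uniform bound $\sup_n\mathbf{E}_{\mathbb{Q}}|\theta_t^n|^{1/2}<\infty$ of Lemma \ref{tajtt}, which is exactly your Markov-inequality argument plus the change of measure. One minor quibble: your closing claim that moment control on $d\mathbb{P}/d\mathbb{Q}$ is ``indispensable'' overstates matters, since tightness transfers automatically between equivalent measures by uniform absolute continuity (for $d\mathbb{P}/d\mathbb{Q}\in L^1(\mathbb{Q})$, for every $\varepsilon>0$ there is $\delta>0$ with $\mathbb{Q}(A)<\delta\Rightarrow\mathbb{P}(A)<\varepsilon$ for all events $A$), so your H\"older step, while perfectly valid and quantitative, is not strictly needed.
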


\begin{proposition}\label{kabi} Let $\left\{ \theta^{n}\right\}_{n\geqslant1} $ be a sequence of trading strategies whose set of laws is tight. 
Let $\mu_n$
be the law of $X_{T}^{z}\left(\theta^{n}\right)-B$ for all $n$. Under Assumption \ref{fried}, there exists a law $\mu^{\star}$
and a trading strategy $\theta^{\star}$ such that $\mu^{\star}=Law\left(X_{T}^{z}\left(\theta^{\star}\right)-B\right)$
and $\mu^{\star}$ is an accumulation point of the sequence $\left\{ \mu_{n}\right\}_{n\geq 1}$
in the weak (narrow) topology. 
\end{proposition}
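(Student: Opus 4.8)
The plan is to pass to a weak limit of the joint laws of the strategies, the price increments and the benchmark, and then to reconstruct from the limiting law a genuine \emph{predictable} strategy, using the independent randomising variable $\varepsilon$ of Assumption \ref{fried} to supply the extra randomness that predictability requires. For each $n$ set $\Xi^{n}:=(\varepsilon,Z_{1},\ldots,Z_{T},\theta^{n}_{1},\ldots,\theta^{n}_{T},\Delta S_{1},\ldots,\Delta S_{T},B)$, a random element of a Polish space. The family $\{\mathrm{Law}(\Xi^{n})\}_{n}$ is tight, because the coordinates $\varepsilon,Z,\Delta S,B$ carry a single law independent of $n$ (tight since the $Z_{i}$ are bounded, $\Delta S_{t}\in\mathscr{W}$ and $B\in L^{1+r}$) while $\{\theta^{n}\}$ is tight by hypothesis (cf.\ Lemma \ref{rela}). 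By Prokhorov's theorem a subsequence, still denoted $n$, satisfies $\mathrm{Law}(\Xi^{n})\to\nu$ weakly; write the limit coordinates as $(\varepsilon,Z,\Theta,\Delta\widetilde S,\widetilde B)$. Since the $(\varepsilon,Z,\Delta S,B)$-marginal does not depend on $n$, it is preserved in the limit, so $\Delta\widetilde S_{t}=f^{(t)}(Z_{1},\ldots,Z_{t})$ and $\widetilde B=g_{B}(Z_{1},\ldots,Z_{T})$ hold $\nu$-a.s.\ \emph{without any continuity assumption} on $f^{(t)},g_{B}$.

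Next I would identify the accumulation point. The map $(\vartheta,\delta,b)\mapsto z+\sum_{t}\vartheta_{t}\cdot\delta_{t}-b$ is continuous, and $\mu_{n}$ is its image under $\mathrm{Law}(\theta^{n},\Delta S,B)$; hence by the continuous mapping theorem $\mu_{n}\to\mu^{\star}:=\mathrm{Law}_{\nu}(z+\sum_{t}\Theta_{t}\cdot\Delta\widetilde S_{t}-\widetilde B)$ along the subsequence, so $\mu^{\star}$ is an accumulation point of $\{\mu_{n}\}$. It then remains to realise $\mu^{\star}$ as $\mathrm{Law}(X_{T}^{z}(\theta^{\star})-B)$ for a $\theta^{\star}\in\Phi$. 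By the $\nu$-a.s.\ identities of the previous paragraph it suffices to construct a \emph{predictable} $\theta^{\star}$ on $(\Omega,\mathcal{F},\mathbb{P})$ such that $(Z,\theta^{\star})$ is distributed as the $(Z,\Theta)$-marginal of $\nu$.

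The reconstruction rests on one independence property of $\nu$. For each $n$, $\theta^{n}_{1},\ldots,\theta^{n}_{t}$ are $\sigma(\varepsilon,Z_{1},\ldots,Z_{t-1})$-measurable while $Z_{t}$ is independent of that $\sigma$-algebra; testing against products of bounded continuous functions and passing to the weak limit yields $Z_{t}\perp(\varepsilon,Z_{1},\ldots,Z_{t-1},\Theta_{1},\ldots,\Theta_{t})$ under $\nu$ for every $t$ --- genuine independence, unlike conditional independence, is stable under weak convergence. I would then build $\theta^{\star}$ recursively: split $\varepsilon$ into i.i.d.\ uniforms $U_{1},\ldots,U_{T}$ independent of $Z$, and put $\theta^{\star}_{t}=\phi_{t}(Z_{1},\ldots,Z_{t-1},\theta^{\star}_{1},\ldots,\theta^{\star}_{t-1},U_{t})$, where $\phi_{t}$ is a measurable randomisation of the regular conditional law, under $\nu$, of $\Theta_{t}$ given $(Z_{1},\ldots,Z_{t-1},\Theta_{1},\ldots,\Theta_{t-1})$. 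Each $\theta^{\star}_{t}$ is $\mathcal{F}_{t-1}$-measurable, so $\theta^{\star}\in\Phi$; and the independence just established guarantees, by induction on $t$, that the predictable ordering $\Theta_{1},Z_{1},\Theta_{2},Z_{2},\ldots$ factorises correctly, so that $(Z,\theta^{\star})$ indeed has the $(Z,\Theta)$-marginal of $\nu$. Together with the $\nu$-a.s.\ identities this gives $\mathrm{Law}(X_{T}^{z}(\theta^{\star})-B)=\mu^{\star}$.

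The crux is this last reconstruction: weak convergence by itself only produces a law $\nu$ in which $\Theta_{t}$ need not be a function of $(\varepsilon,Z_{1},\ldots,Z_{t-1})$, and it is exactly the randomisation built into Assumption \ref{fried} together with the inherited independence $Z_{t}\perp(\varepsilon,Z_{1},\ldots,Z_{t-1},\Theta_{1},\ldots,\Theta_{t})$ that allows predictability to be recovered through the sequential disintegration above. A secondary difficulty, circumvented by carrying $\Delta S_{t}$ and $B$ as explicit coordinates, is that the payoff is genuinely discontinuous in $(Z,\theta)$ once $f^{(t)},g_{B}$ are allowed to be merely Borel, so the continuous-mapping step must be performed in the enlarged coordinate system.
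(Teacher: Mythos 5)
Your proposal is correct, but it reaches the conclusion by a genuinely different route than the paper in two respects. First, the paper handles the fact that $f^{(t)}$ and $g_B$ are merely Borel by a Skorokhod-representation argument combined with Th\'eor\`eme 1 of \cite{BEKSY}: the Skorokhod copies $\bar{Z}^{n_k}$ converge almost surely \emph{and} are identically distributed, and the cited theorem says exactly that convergence in probability of identically distributed variables survives composition with Borel functions, whence $\Delta\bar{S}^k_i\to\Delta\bar{S}_i$ and $\bar{B}^k\to\bar{B}$ in probability. You avoid both Skorokhod and \cite{BEKSY} by enlarging the coordinate system: since $\Delta S$ and $B$ ride along as explicit coordinates, the relations $\Delta\widetilde{S}_t=f^{(t)}(Z_1,\ldots,Z_t)$ and $\widetilde{B}=g_B(Z_1,\ldots,Z_T)$ are almost-sure Borel identities of a marginal that is constant in $n$, hence inherited by the weak limit $\nu$, and the payoff map is continuous in the enlarged coordinates, so the continuous mapping theorem applies directly. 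Second, you reconstruct the predictable strategy explicitly --- via the inherited independence $Z_t\perp(\varepsilon,Z_1,\ldots,Z_{t-1},\Theta_1,\ldots,\Theta_t)$ (independence, unlike measurability of $\Theta_t$ with respect to $\sigma(\varepsilon,Z_1,\ldots,Z_{t-1})$, does survive weak convergence), the splitting of $\varepsilon$ into independent uniforms, and sequential disintegration --- whereas the paper outsources precisely this step to Lemma 9.4 and the proof of Theorem 6.8 of \cite{CarRas11}; your randomisation argument is the same device in spirit, but written out and self-contained. What the paper's route buys is brevity (two citations do most of the work) and convergence in probability on a common space for the intermediate objects; what yours buys is a self-contained proof that isolates the two genuine difficulties (Borel payoffs, recovery of predictability) and resolves each with elementary tools. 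One cosmetic remark: tightness of the fixed coordinates $(\varepsilon,Z,\Delta S,B)$ needs no moment hypotheses, since a single probability law on a Polish space is automatically tight, so your parenthetical appeal to $\Delta S_t\in\mathscr{W}$ and $B\in L^{1+r}$ is superfluous (and indeed the proposition does not assume Assumption \ref{b}a).
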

\begin{proof} Lemma 9.4 of \cite{CarRas11} provides independent random variables $\varepsilon',\tilde{\varepsilon}$,
uniform on $[0,1]$,
which are both functions of $\varepsilon$. Now following the proof of Theorem 6.8 of \cite{CarRas11}
verbatim (with $\phi_1=\ldots=\phi_T=0$) we obtain an $\mathcal{F}_t$-predictable process $\theta^{\star}_t$
such that, by Prokhorov's theorem, the law of 
$$
Y_k:=(\varepsilon',\theta^{n_k}_1,\ldots,\theta^{n_k}_T,Z_1,\ldots,Z_T)
$$
tends to that of 
$$
Y:=(\varepsilon',\theta^{\star}_1,\ldots,\theta^{\star}_T,Z_1,\ldots,Z_T)
$$
for a subsequence $n_k$, as $k\to\infty$. Skorokhod's theorem provides random variables
$$
\bar{Y}_k=(\bar{\varepsilon}'(n_k),\bar{\theta}^{n_k}_1,\ldots,\bar{\theta}^{n_k}_T,\bar{Z}_1^{n_k},\ldots,\bar{Z}_T^{n_k})
$$
and
$$
\bar{Y}=(\bar{\varepsilon}',\bar{\theta}_1,\ldots,\bar{\theta}_T,\bar{Z}_1,\ldots,\bar{Z}_T)
$$
on some probability space such that $Law(Y_k)=Law(\bar{Y}_k)$ for all $k$, $Law(Y)=Law(\bar{Y})$ and
$\bar{Y}_k$ tends to $\bar{Y}$ a.s.

 By assumption, we also have that $\Delta S_{i}=f^{\left(i\right)}\left(Z_{1},\ldots,Z_{i}\right)$ and $B=g_B(Z_1,\ldots,Z_T)$. 
Denoting $\Delta\bar{S}_{i}=f^{\left(i\right)}\left(\bar{Z}_1,\ldots,\bar{Z}_i\right)$ and $\bar{B}:=g_B(\bar{Z}_1,
\ldots,\bar{Z}_T)$ we have that
 \[
 Law\left(\left(Z_{i}\right)_{i\leqslant T},\left(\theta_{i}^{\star}\right)_{i\leqslant T},\left(\Delta S_{i}\right)_{i\leqslant T},B\right)=
 Law\left(\left(\bar{Z_{i}}\right)_{i\leqslant T},\left(\bar{\theta_{i}}\right)_{i\leqslant T},\left(\Delta\bar{S_{i}},\right)_{i\leqslant T},\bar{B}\right).
\]
Therefore 
\begin{equation}
Law\left(\sum_{i=1}^{T}\theta_{i}^{\star}\cdot\Delta S_{i}-B\right)=
Law\left(\sum_{i=1}^{T}\bar{\theta_{i}}\cdot\Delta\bar{S}_{i}-\bar{B}\right).\label{eq:eq in law}
\end{equation}

Denote $\bar{B}^k:=g_B(\bar{Z}^{n_k}_1,\ldots,\bar{Z}^{n_k}_T)$ and $\Delta\bar{S}^k_i:=f^{(i)}(\bar{Z}_1^{n_k},\ldots,
\bar{Z}_i^{n_k})$. 
By Th\'eor\`eme 1 in \cite{BEKSY}, $\Delta\bar{S}_{i}^{k}\rightarrow\Delta\bar{S}_{i}$
for all $i$ in probability and also $\bar{B}^k\to \bar{B}$ in probability, $k\to\infty$. 

It follows that
\begin{equation}
\sum_{i=1}^{T}\bar{\theta}_{i}^{k}\cdot\Delta\bar{S}_{i}^{k}-\bar{B}^k\to\sum_{i=1}^{T}\bar{\theta_{i}}\cdot\Delta
\bar{S}_{i}-\bar{B},
\end{equation} 
in probability, hence in law. 
In other words, we have 
\begin{equation}
\sum_{i=1}^{T}\theta_{i}^{k}\cdot\Delta S_{i}-B\to\sum_{i=1}^{T}\theta_{i}^{\star}\cdot\Delta S_{i}-B
\end{equation}
in law. This finishes the proof.
\end{proof}

Our first main result on the existence of optimal strategies now follows easily from Proposition \ref{kabi} above.

\begin{theorem}\label{mta} Let Assumptions \ref{fried}, \ref{sna}, \ref{u}, \ref{p}a and \ref{b}a be in force
and let $u_{\pm}$, $w_{\pm}$ be continuous. 
Then the supremum in (\ref{eq:optimal behaviour}) is attained by an optimal strategy $\theta^{\star}$.
\end{theorem}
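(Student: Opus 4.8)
The plan is to combine the tightness result with the weak-convergence construction already established, and then pass to the limit in the objective functional $V$ using upper/lower semicontinuity. First I would take a maximising sequence $\theta^n\in\mathcal{A}(z)$ for \eqref{eq:optimal behaviour}, i.e. $V(z,\theta^n_1,\ldots,\theta^n_T)\to\sup_{\theta\in\mathcal{A}(z)}V(z,\theta_1,\ldots,\theta_T)$. By Theorem \ref{wpa} this supremum is finite, so without loss of generality $V(z,\theta^n_1,\ldots,\theta^n_T)\geqslant c$ for some constant $c>-\infty$ and all $n$. Since $V_+\geqslant 0$, this forces $\sup_n V_-(z,\theta^n_1,\ldots,\theta^n_T)<\infty$, which is exactly hypothesis \eqref{eq:V- hypothesis}. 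Lemma \ref{rela} then gives that $\{\theta^n\}$ is tight as a sequence of $\mathbb{R}^{dT}$-valued random variables, so Proposition \ref{kabi} applies.

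Next I would invoke Proposition \ref{kabi} to extract a subsequence $n_k$, a limiting law $\mu^\star$, and a predictable strategy $\theta^\star$ with $\mu^\star=\mathrm{Law}(X_T^z(\theta^\star)-B)$, such that $\mu_{n_k}\to\mu^\star$ weakly, where $\mu_n=\mathrm{Law}(X_T^z(\theta^n)-B)$. The core analytic work is then to show
\begin{equation}
V_+(X_T^z(\theta^\star)-B)\geqslant\limsup_k V_+(X_T^z(\theta^{n_k})-B),\qquad
V_-(X_T^z(\theta^\star)-B)\leqslant\liminf_k V_-(X_T^z(\theta^{n_k})-B),
\end{equation}
so that $V(z,\theta^\star)\geqslant\limsup_k V(z,\theta^{n_k})=\sup$, proving optimality. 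Since both $V_\pm$ depend on $X_T^z(\theta)-B$ only through its law (via $\mathbb{P}(u_\pm((\cdot)_\pm)>y)$), and since $u_\pm,w_\pm$ are continuous by hypothesis, the integrands $w_\pm(\mathbb{P}(u_\pm((X-B)_\pm)>y))$ behave well under weak convergence of $\mu_{n_k}$; the continuity of $u_\pm$ and $w_\pm$ lets me pass to the limit inside the probability and inside $w_\pm$ at continuity points $y$ of the limiting distribution (which are all but countably many $y$).

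The main obstacle, as always for $V_-$, is the exchange of limit and integral: weak convergence alone controls $V_+$ from above via Fatou-type arguments (using $u_+$'s growth bound $u_+(x)\leqslant k_+(x^\alpha+1)$ and a uniform-integrability estimate coming from \eqref{eq:xT plus} in Proposition \ref{tajt}), but to get lower semicontinuity of $V_-$ I need that no mass of the loss part escapes to infinity. Here I would lean on the uniform moment bound \eqref{eq:xT minus}, namely $\sup_k\mathbf{E}_{\mathbb{Q}}(X_T^z(\theta^{n_k}))_-^\pi<\infty$ with $\pi>1$, transferred back to $\mathbb{P}$ via $\rho,1/\rho\in\mathscr{W}$, which furnishes uniform integrability of $(X_T^z(\theta^{n_k})-B)_-^\beta$ and hence the required lower semicontinuity through Fatou's lemma applied to the $y$-integral. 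Finally I must confirm $\theta^\star\in\mathcal{A}(z)$, i.e. $V_-(X_T^z(\theta^\star)-B)<\infty$; this follows because the liminf inequality bounds it by $\liminf_k V_-(z,\theta^{n_k})\leqslant\sup_n V_-(z,\theta^n)<\infty$, which we already secured. This closes the argument.
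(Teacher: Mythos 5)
Your overall architecture is the same as the paper's (maximising sequence, bound on $V_-$, tightness via Lemma \ref{rela}, Proposition \ref{kabi}, then a domination argument for $V_+$ and Fatou for $V_-$), but your very first step contains a genuine gap. You claim that $V(z,\theta^n_1,\ldots,\theta^n_T)\geqslant c$ together with $V_+\geqslant 0$ ``forces'' $\sup_n V_-(z,\theta^n_1,\ldots,\theta^n_T)<\infty$. It does not: $V_+-V_-\geqslant c$ only gives $V_-\leqslant V_+-c$, and nothing yet bounds $V_+$ along the sequence; both $V_+$ and $V_-$ could tend to $+\infty$ with their difference staying bounded (well-posedness bounds the difference, not the two parts separately). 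The correct argument --- the one the paper invokes by referring back to the proof of Theorem \ref{wpa} --- is quantitative: Lemmas \ref{moz1} and \ref{moz2} yield $V_+\leqslant C_1+C_2\,V_-^{\zeta}$ with $0<\zeta<1$, hence $V\leqslant C_1+C_2V_-^{\zeta}-V_-\to-\infty$ as $V_-\to\infty$, so boundedness below of $V$ along the maximising sequence does force $\sup_n V_-<\infty$. This step is not cosmetic: the bound on $V_-$ is exactly hypothesis \eqref{eq:V- hypothesis}, which is needed both for tightness (Lemma \ref{rela}) and for the moment bound \eqref{eq:xT plus} of Proposition \ref{tajt} that you yourself invoke to dominate the $V_+$-integrand; without it your construction has no starting point.

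A second, more minor point: your treatment of $V_-$ is muddled, although the conclusion is salvageable. The inequality you need, $V_-\left(X_T^z(\theta^{\star})-B\right)\leqslant\liminf_k V_-\left(X_T^z(\theta^{n_k})-B\right)$, goes in the direction that plain Fatou's lemma delivers from pointwise a.e.\ convergence (in $y$) of the nonnegative integrands $y\mapsto w_-\left(\mathbb{P}\left(u_-\left(\left(X_T^z(\theta^{n_k})-B\right)_-\right)>y\right)\right)$; no uniform integrability and no ``no mass escapes to infinity'' condition is required, since escaping mass can only increase the $\liminf$, which is harmless for this direction. Moreover, the uniform integrability of $\left(X_T^z(\theta^{n_k})-B\right)_-^{\beta}$ that you claim to extract from \eqref{eq:xT minus} is not in general true: \eqref{eq:xT minus} gives a $\pi$-th moment bound for some $\pi>1$, but $\beta$ may well exceed $\pi$. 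Deleting those claims and citing Fatou directly (as the paper does) repairs this part; the real defect remains the first step.
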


\begin{proof} 
Let us take a maximising sequence of admissible strategies $\left\{ \theta^{j}\right\}_{j\geqslant1}$, i.e.
$$
V\left(z,\theta_{1}^{j},\ldots,\theta_{T}^{j}\right)\rightarrow \sup_{\theta\in\mathcal{A}\left(z\right)}V\left(z,\theta_{1},\ldots,\theta_{T}\right),
\ j\to\infty.
$$
The proof of Theorem \ref{wpa} shows that we necessarily have 
$\sup_{j}V_{-}\left(z,\theta_{1}^{j},\ldots,\theta_{T}^{j}\right)<\infty$, showing (\ref{eq:V- hypothesis}).

Due to Lemma \ref{rela}, we can conclude 
that the sequence $\left\{ \theta^{j}\right\}_{j\geqslant1} $ is tight. Proposition \ref{kabi} then shows that there is a strategy 
$\theta^{\star}\in\Phi$,
$\theta^{\star}=\left(\theta_{1}^{\star},\theta_{2}^{\star},\ldots,\theta_{T}^{\star}\right)\in\mathbb{R}^{dT}$
such that $X_{T}^{z}\left(\theta^{j}\right)-B\to X_{T}^{z}\left(\theta^{\star}\right)-B$ in law (along a subsequence
which we assume to be the original sequence). Now our aim is to prove 
\begin{equation}\label{eq:limsup_claim}
\limsup_{j}V\left(z,\theta_{1}^{j},\theta_{2}^{j},\ldots,\theta_{T}^{j}\right)\leqslant V\left(z,\theta_{1}^{\star},\theta_{2}^{\star},\ldots,\theta_{T}^{\star}\right).
\end{equation}

By the continuous mapping theorem we have 
\[\left(X_{T}^{z}\left(\theta^{j}\right)-B\right)_{\pm}\rightarrow\left(X_{T}^{z}\left(\theta^{\star}\right)-B\right)_{\pm}
\]
and
\begin{equation}\label{bhjk}
u_{\pm}\left(\left(X_{T}^{z}\left(\theta^{j}\right)-B\right)_{\pm}\right)\rightarrow u_{\pm}\left(\left(X_{T}^{z}\left(\theta^{\star}\right)-B\right)_{\pm}\right)
\end{equation}
in law. Let $D$ be the set of discontinuity
points of the limiting distributions in \eqref{bhjk}. Then 
\[
\mathbb{P}\left(u_{\pm}\left(\left(X_{T}^{z}\left(\theta^{j}\right)-B\right)_{\pm}\right)\geqslant y\right)\rightarrow\mathbb{P}
\left(u_{\pm}\left(\left(X_{T}^{z}\left(\theta^{\star}\right)-B\right)_{\pm}\right)\geqslant y\right)\,\,\,\mbox{for all }y\in\mathbb{R}^{+}\backslash D,
\]
in particular, for Lebesgue-a.e. $y$. By Assumption \ref{u}, 
\begin{equation}
w_{+}\left(\mathbb{P}\left(u_{+}\left(\left(X_{T}^{z}\left(\theta^{j}\right)-B\right)_{+}\right)\geqslant y\right)\right)\leqslant 
g_+\left[\mathbb{P}\left(u_{+}\left(\left(X_{T}^{z}\left(\theta^{j}\right)-B\right)_{+}\right)\geqslant y\right)\right]^{\gamma}.
\end{equation}
Take $1/\gamma <\lambda<1/\alpha$. Applying Markov's inequality and Assumption \ref{u} again,
\begin{equation}
g_+\left[\mathbb{P}\left(u_{+}^{\lambda}\left(\left(X_{T}^{z}\left(\theta^{j}\right)-B\right)_{+}\right)\geqslant y^{\lambda}\right)\right]^{\gamma}
\leqslant \frac{c'}{y^{\lambda\gamma}}\left\{\mathbf{E}_{\mathbb{P}}\left(1+\left(X_{T}^{z}\left(\theta^{j}\right)-B\right)_{+}^{\alpha}\right)^{\lambda}\right\}^{\gamma} ,\label{eq:Chebyshev}
\end{equation}
for some $c'>0$, hence
\[
c'\left[\mathbb{P}\left(u_{+}^{\lambda\gamma}\left(\left(X_{T}^{z}\left(\theta^{j}\right)-B\right)_{+}\right)\geqslant y^{\lambda}\right)\right]^{\gamma}
\leqslant\frac{c''}{y^{\lambda\gamma}}\mathbf{E}_{\mathbb{P}}^{\gamma}\left(1+\left(X_{T}^{z}\left(\theta^{j}\right)-B\right)_{+}^{\alpha\lambda}\right).
\]
with some $c''>0$. Furthermore,
\begin{equation}
\frac{c''}{y^{\lambda\gamma}}\mathbf{E}_{\mathbb{P}}^{\gamma}\left(1+\left(X_{T}^{z}\left(\theta^{j}\right)-B\right)_{+}^{\alpha\lambda}\right)\leqslant
c'''\left(1+\left[\mathbf{E}_{\mathbb{P}}\left(X_{T}^{z}\left(\theta^{j}\right)\right)_{+}^{\alpha\lambda}\right]^{\gamma}+
\left[\mathbf{E}_{\mathbb{P}}B_{-}^{\alpha\lambda}\right]^{\gamma}\right).\label{eq:attainability bound I}
\end{equation}
The last term is finite by Assumption \ref{b}a.
H\"older's inequality applied with $p=\frac{1}{\alpha\lambda}$ gives
$$
\mathbf{E}_{\mathbb{P}}\left(X_{T}^{z}\left(\theta^{j}\right)\right)_{+}^{\alpha\lambda}=\mathbf{E}_{\mathbb{Q}}\left[\frac{1}{\rho}\cdot\left(X_{T}^{z}\left(\theta^{j}\right)\right)_{+}^{\alpha\lambda}\right]\leqslant C_{1}\left[\mathbf{E}_{\mathbb{Q}}\left(X_{T}^{z}\left(\theta^{j}\right)\right)_{+}\right]^{\alpha\lambda}
$$
with $C_{1}=\mathbf{E}_{\mathbb{Q}}\left[\rho^{1/(\alpha\lambda-1)}\right]^{1-\alpha\lambda}$.
Thus
\begin{equation*}
  w_{+}\left(\mathbb{P}\left(u_{+}\left(\left(X_{T}^{z}\left(\theta^{j}\right)-B\right)_{+}\right)
\geqslant y\right)\right)\leqslant 
 \frac{1}{y^{\lambda\gamma}}\left\{ 
 D_{1}+D_{2}\left[\mathbf{E}_{\mathbb{Q}}\left(X_{T}^{z}\left(\theta^{j}\right)\right)_{+}\right]^{\alpha\lambda\gamma}\right\},
\end{equation*}
with suitable constants $D_1,D_2$.
The condition $\sup_{j}V_{-}\left(z,\theta_{1}^{j},\ldots,\theta_{T}^{j}\right)<\infty$ 
implies that $\sup_{j}\mathbf{E}_{\mathbb{Q}}\left(X_{T}^{z}\left(\theta^{j}\right)\right)_{+}<\infty$ (see Proposition \ref{tajt}).
This in turn gives that the sequence of positive functions $w_{+}\left(\mathbb{P}\left(u_{+}\left(\left(X_{T}^{z}\left(\theta^{j}\right)-B\right)_{+}\right)\geqslant y\right)\right)$
can be dominated by $\frac{K}{y^{\lambda\gamma}}$ for some $K>0$. 

This estimate allows to apply Lebesgue's theorem since 
\begin{equation}
w_{+}\left(\mathbb{P}\left(u_{+}\left(\left(X^{z}\left(\theta^{j}\right)-B\right)_{+}\right)\geqslant y\right)\right)
\leqslant\mathbf{I}_{\left[0,1\right]}\left(y\right)+\frac{K}{y^{\lambda\gamma}}\cdot\mathbf{I}_{\left(1,\infty\right)}\left(y\right),\label{eq:fatous cond}
\end{equation}

which yields
\begin{eqnarray*}
\limsup_{j}V_{+}\left(z,\theta^{j}_1,\ldots,\theta_T^j\right)\leqslant\int_{0}^{\infty}\limsup_{j}w_{+}
\left(\mathbb{P}\left(u_{+}\left(\left(X^{z}\left(\theta^{j}\right)-B\right)_{+}\right)\geqslant y\right)\right)dy,
\end{eqnarray*}
and the latter equals $V_{+}\left(z,\theta^{\star}_1,\ldots,\theta^{\star}_T\right)$.

On the other hand, by Fatou's lemma applied to $V_{-}\left(z,\theta_{1}^{j},\ldots,\theta_{T}^{j}\right)$
we have
\[
V_{-}\left(z,\theta_{1}^{\star},\ldots,\theta_{T}^{\star}\right)\leqslant\liminf_j V_{-}\left(z,\theta_{1}^{j},\ldots,\theta_{T}^{j}\right)
\leqslant\sup_{j}V_{-}\left(x_{0},\theta_{1}^{j},\ldots,\theta_{T}^{j}\right)<\infty,
\]
so $\theta^{\star}\in\mathcal{A}\left(z\right)$ and 
\begin{gather*}
\limsup_{j}V^{+}\left(z,{\theta^{j}_1},\ldots,\theta^j_T\right)-\liminf_j V^{-}\left(z,{\theta^{j}_1},\ldots,
\theta^j_T\right)\leqslant\\
V^{+}\left(z,\theta_{1}^{\star},\ldots,\theta_{T}^{\star}\right)-V^{-}\left(z,\theta_{1}^{\star},\ldots,\theta_{T}^{\star}\right),\\
\end{gather*}
thus
\begin{gather*}
\limsup_{j}\left\{ V^{+}\left(z,{\theta^{j}_1},\ldots,\theta^j_T\right)-V^{-}\left(z,{\theta^{j}_1},\ldots,
\theta^j_T\right)\right\} \leqslant\\
V^{+}\left(z,\theta_{1}^{\star},\ldots,\theta_{T}^{\star}\right)-V^{-}\left(z,\theta_{1}^{\star},\ldots,\theta_{T}^{\star}\right),
\end{gather*}
which yields (\ref{eq:limsup_claim}). Hence $\theta^{\star}$ is optimal
and the supremum is attainable.
\end{proof}

Now we turn to the case of Assumptions \ref{p}b and \ref{b}b. 
%

\begin{theorem}\label{mtb}  Let Assumptions \ref{fried}, \ref{sna}, \ref{u}, \ref{p}b and \ref{b}b be in force and let
$u_{\pm}$, $w_{\pm}$ be continuous. Then the supremum is attained in \eqref{eq:optimal behaviour} by some $\theta^{\star}\in\mathcal{A}(z)$.
\end{theorem}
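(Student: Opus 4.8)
The plan is to reproduce the architecture of the proof of Theorem \ref{mta}, keeping Proposition \ref{kabi} as the single genuinely new ingredient (the one that lets us drop the continuity of $f^{(t)},g_B$ imposed in \cite{CarRas11}), and replacing the a priori estimates of Proposition \ref{tajt} and the tightness of Lemma \ref{rela}, which were tailored to Assumptions \ref{p}a and \ref{b}a, by their analogues under Assumptions \ref{p}b and \ref{b}b as established in \cite{CarRas11}. Concretely, I would first take a maximising sequence $\{\theta^j\}_{j\geq 1}\subset\mathcal{A}(z)$. Since Theorem \ref{wpb} guarantees well-posedness, the same contradiction argument as in Theorem \ref{mta} forces $\sup_j V_-(z,\theta_1^j,\ldots,\theta_T^j)<\infty$, so that \eqref{eq:V- hypothesis} holds along the sequence.

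The second step is to establish tightness of $\{\theta^j\}$ under Assumptions \ref{p}b and \ref{b}b. Here Assumption \ref{b}b plays the structural role that $B\in L^{1+r}$ played before: it supplies $\phi\in\Phi$ and $b\in\mathbb{R}$ with $X_T^b(\phi)\leq B$, whence $X_T^z(\theta^j)-B\leq X_T^{z-b}(\theta^j-\phi)$ and therefore $(X_T^z(\theta^j)-B)_-\geq (X_T^{z-b}(\theta^j-\phi))_-$. Thus the uniform bound on $V_-$ controls the negative part of the \emph{shifted} portfolio $X_T^{z-b}(\theta^j-\phi)$, and the moment estimates of \cite{CarRas11} then bound its positive part and its increments, yielding tightness of $\{\theta^j-\phi\}$ and hence, $\phi$ being fixed, of $\{\theta^j\}$. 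I emphasise that these estimates are purely a priori and use \emph{no} continuity of $f^{(t)},g_B$.

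With tightness in hand I would apply Proposition \ref{kabi} verbatim (its hypotheses are tightness together with Assumption \ref{fried}, which are insensitive to the \ref{p}a-versus-\ref{p}b dichotomy) to extract a limiting strategy $\theta^\star\in\Phi$ and a subsequence along which $X_T^z(\theta^j)-B\to X_T^z(\theta^\star)-B$ in law. The concluding step is the semicontinuity argument. Lower semicontinuity of $V_-$ follows from Fatou's lemma exactly as in Theorem \ref{mta}, giving both $V_-(z,\theta^\star)\leq\liminf_j V_-(z,\theta^j)<\infty$, so that $\theta^\star\in\mathcal{A}(z)$, and the correct direction of inequality in the limit. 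For $V_+$ one again wants a $j$-uniform integrable majorant of the integrand $w_+(\mathbb{P}(u_+((X_T^z(\theta^j)-B)_+)\geq y))$ so as to apply Lebesgue's theorem and obtain $\limsup_j V_+(z,\theta^j)\leq V_+(z,\theta^\star)$; subtracting the two bounds then yields $\limsup_j V(z,\theta^j)\leq V(z,\theta^\star)$ and the optimality of $\theta^\star$.

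I expect this last point to be the main obstacle. The domination in Theorem \ref{mta} rested on choosing $1/\gamma<\lambda<1/\alpha$, which needs $\alpha/\gamma<1$ from Assumption \ref{p}a; under Assumption \ref{p}b we only have $\alpha/\gamma<\beta$, so when $\beta>1$ the relation $\alpha<\gamma$ may fail and that interval for $\lambda$ is empty, making the pure gains-moment majorant $K/y^{\lambda\gamma}$ unavailable. I would therefore replace this step by the corresponding uniform tail estimate from \cite{CarRas11}, in which the gain tail is controlled not in isolation but through the interplay of gains and losses afforded by $\alpha<\beta$, $\alpha/\gamma<\beta$ and the comparison with $\phi$ from Assumption \ref{b}b (the same inequality $(X_T^z(\theta^j)-B)_-\geq (X_T^{z-b}(\theta^j-\phi))_-$ used for tightness). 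Once such a uniform majorant is secured, the remainder of the argument is identical to that of Theorem \ref{mta}, and the proof closes.
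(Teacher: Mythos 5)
Your proposal is correct and follows essentially the same route as the paper: the paper's proof simply invokes the proof of Theorem 6.8 of \cite{CarRas11} verbatim for the maximising sequence, the a priori bounds and the tightness step under Assumptions \ref{p}b and \ref{b}b, inserts the argument of Proposition \ref{kabi} to obtain convergence in law of $X_T^z(\theta^j)-B$ without continuity of $f^{(t)},g_B$, and then again defers to the Fatou-type (domination) argument of \cite{CarRas11} for the $V_+$/$V_-$ limit inequalities. Your additional observation that the majorant $K/y^{\lambda\gamma}$ of Theorem \ref{mta} is unavailable when $\alpha\geq\gamma$, so that the concluding step must be the one from \cite{CarRas11} rather than the one from Theorem \ref{mta}, is exactly the point implicit in the paper's citation.
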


\begin{proof} 
We can follow the proof of Theorem 6.8 in \cite{CarRas11} verbatim up to the point of
constructing $$
Y:=(\varepsilon',\theta^{\star}_1,\ldots,\theta^{\star}_T,Z_1,\ldots,Z_T).
$$
Then the argument of Proposition \ref{kabi} shows that 
\begin{equation}
\sum_{i=1}^{T}\theta_{i}^{k}\cdot\Delta S_{i}-B\to\sum_{i=1}^{T}\theta_{i}^{\star}\cdot\Delta S_{i}-B
\end{equation}
in law. From this point on the Fatou-lemma argument of Theorem 6.8 in \cite{CarRas11} applies verbatim and
optimality of $\theta^{\star}$ can be established.
\end{proof}


\section{A sufficient condition}\label{adde}

Assumption \ref{fried} may look restrictive at first sight. Hence we provide a simple sufficient condition for its validity.

\begin{proposition}\label{hv}
Let $\mathcal{G}_t:=\sigma(\tilde{Z}_1,\ldots,\tilde{Z}_t)$ for $t=1,\ldots,T$ and $\mathcal{G}_0=\{\emptyset,\Omega\}$ where the
$\tilde{Z}_i$, $i=1,\ldots,T$ are $N$-dimensional random variables with a Lebesgue-a.e. positive joint density on $\mathbb{R}^{TN}$. Then
there are independent $\mathbb{R}^N$-valued random variables $Z_i$, $i=1,\ldots,T$ such that 
$\mathcal{G}_t=\sigma({Z}_1,\ldots,{Z}_t)$ for $t=1,\ldots,T$.
\end{proposition}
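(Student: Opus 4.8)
The plan is to construct the $Z_i$ by a block version of the Rosenblatt transformation applied to the $\mathbb{R}^{TN}$-valued vector $(\tilde Z_1,\dots,\tilde Z_T)$. Denote by $f$ its joint density on $\mathbb{R}^{TN}$, positive a.e. Writing out all $TN$ scalar coordinates as $W_1,\dots,W_{TN}$ in the natural order (block $t$ occupying coordinates $(t-1)N+1,\dots,tN$), I would define recursively
\[
U_j:=F_j\bigl(W_j\mid W_1,\dots,W_{j-1}\bigr),\qquad j=1,\dots,TN,
\]
where $F_j(\,\cdot\mid W_1,\dots,W_{j-1})$ is the regular conditional CDF of $W_j$ given the previous coordinates. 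These regular conditional distributions exist because everything lives on the Polish space $\mathbb{R}^{TN}$, and each $U_j$ is by construction a measurable function of $W_1,\dots,W_j$ only; this lower-triangular dependence is the feature that will let me match the filtrations.

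First I would record the two analytic facts that positivity of $f$ buys. Since $f>0$ a.e., for a.e.\ value of the conditioning coordinates the conditional density of $W_j$ is strictly positive a.e., so $x\mapsto F_j(x\mid\cdot)$ is a continuous, strictly increasing bijection of $\mathbb{R}$ onto $(0,1)$. Consequently the map $R:(W_1,\dots,W_{TN})\mapsto(U_1,\dots,U_{TN})$ is, off a null set, a bijection onto $(0,1)^{TN}$ whose inverse is again triangular (the $j$-th coordinate of $R^{-1}$ is a function of $U_1,\dots,U_j$ only, obtained by inverting $F_j$). The classical Rosenblatt theorem then gives that $U_1,\dots,U_{TN}$ are i.i.d.\ uniform on $[0,1]$.

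I would then set $Z_t:=\bigl(U_{(t-1)N+1},\dots,U_{tN}\bigr)\in[0,1]^N$ for $t=1,\dots,T$. Independence of the family $(U_j)$ immediately yields that $Z_1,\dots,Z_T$ are independent $\mathbb{R}^N$-valued random variables, each uniform on the cube and hence bounded, consistent with the remark following Assumption \ref{fried}. It remains to check $\sigma(Z_1,\dots,Z_t)=\mathcal G_t$ for every $t$. The inclusion $\subseteq$ is exactly the triangularity of $R$: each $U_j$ with $j\le tN$ is a function of $W_1,\dots,W_{tN}$, i.e.\ of $(\tilde Z_1,\dots,\tilde Z_t)$. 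The reverse inclusion $\supseteq$ is where positivity is essential: the triangular inverse $R^{-1}$ recovers $W_1,\dots,W_{tN}$, hence $\tilde Z_1,\dots,\tilde Z_t$, from $U_1,\dots,U_{tN}$, so that $\mathcal G_t\subseteq\sigma(Z_1,\dots,Z_t)$.

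The main obstacle I anticipate is measure-theoretic bookkeeping rather than a genuine difficulty: one must verify that the conditional CDFs can be chosen jointly measurable in (conditioning value, argument), that the strict monotonicity and bijectivity onto $(0,1)$ hold simultaneously for a.e.\ conditioning value (so that $R$ and $R^{-1}$ are defined and mutually inverse outside a $\mathbb{P}$-null set), and that discarding this null set does not affect the generated $\sigma$-algebras. Care is also needed so that \emph{a.e.\ positive joint density} propagates to \emph{a.e.\ positive conditional densities} at every level of the recursion; this follows from Fubini applied to the successive disintegrations, but should be stated explicitly.
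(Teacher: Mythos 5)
Your proposal is correct and is essentially the paper's own argument: the inductive construction in Lemma \ref{end} and Corollary \ref{bank} (conditional CDF of each new scalar coordinate given all previous ones, assembled into triangular measurable bijections) is precisely the coordinate-by-coordinate Rosenblatt transformation you invoke, and the final blocking into $Z_t:=(U_{(t-1)N+1},\ldots,U_{tN})$ matches the paper's proof of Proposition \ref{hv} exactly. The measure-theoretic caveats you flag (a.e.\ positivity of conditional densities via Fubini, strict monotonicity only for a.e.\ conditioning value) are glossed over in the paper as well, and are handled by the same standard fix, so they do not constitute a gap relative to the paper's level of rigour.
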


We first recall a foklore-type result (see Lemma 9.6 of \cite{CarRas11}).

\begin{lemma}\label{trf}
Let $X$ be a real-valued random variable with atomless law. Let
$F(x):=P(X\leq x)$ denote its cumulative distribution function. Then
$F(X)$ has uniform law on $[0,1]$.\hfill $\sq$
\end{lemma}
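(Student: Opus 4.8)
The plan is to verify directly that the cumulative distribution function of $F(X)$ coincides with the identity on $[0,1]$, i.e. that $\mathbb{P}(F(X)\leq u)=u$ for every $u\in(0,1)$; since $F(X)$ is automatically $[0,1]$-valued, this identity characterises the uniform law. First I would record that the atomless hypothesis forces $F$ to be continuous: because $\mathbb{P}(X=x)=F(x)-F(x^-)=0$ for every $x$, the non-decreasing function $F$ has no jumps and is therefore continuous, with $F(x)\to 0$ as $x\to-\infty$ and $F(x)\to 1$ as $x\to+\infty$.

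The one genuine difficulty is that $F$ need not be strictly increasing, so it admits no ordinary inverse; I would circumvent this via the generalised right endpoint $x_u:=\sup\{x\in\mathbb{R}:F(x)\leq u\}$. For $u\in(0,1)$ the set $\{x:F(x)\leq u\}$ is nonempty (as $F\to 0$) and bounded above (as $F\to 1>u$), so $x_u$ is finite. Combining monotonicity with continuity, I would show that $\{x:F(x)\leq u\}=(-\infty,x_u]$ and, as a consequence, that $F(x_u)=u$: the bound $F(x_u)\leq u$ follows by choosing an increasing sequence in the set and passing to the limit by continuity, while $F(x_u)\geq u$ follows from the fact that $F(x)>u$ for every $x>x_u$, again by continuity from the right.

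Finally I would identify the event $\{F(X)\leq u\}$ with $\{X\leq x_u\}$. If $X\leq x_u$ then $F(X)\leq F(x_u)=u$ by monotonicity, so $\{X\leq x_u\}\subseteq\{F(X)\leq u\}$; conversely, if $X>x_u$ then $F(X)>u$ by the characterisation of $x_u$, which gives the reverse inclusion. Hence
\[
\mathbb{P}(F(X)\leq u)=\mathbb{P}(X\leq x_u)=F(x_u)=u,\qquad u\in(0,1),
\]
and, the distribution function of $F(X)$ being the identity on $(0,1)$ for a $[0,1]$-valued variable, $F(X)$ is uniform on $[0,1]$. The only step that requires care is the treatment of the flat portions of $F$ (where strict monotonicity fails), and this is precisely what the $\sup$-definition of $x_u$ together with the continuity of $F$ resolves.
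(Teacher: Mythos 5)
Your proof is correct. Note that the paper itself offers no proof to compare against: Lemma \ref{trf} is stated as a folklore result and cited from Lemma 9.6 of \cite{CarRas11}, with the $\sq$ indicating the proof is omitted. Your argument is the standard quantile-type proof and is sound in all details: the reduction of atomlessness to continuity of $F$ via $\mathbb{P}(X=x)=F(x)-F(x^-)=0$; the finiteness of $x_u=\sup\{x:F(x)\leq u\}$ for $u\in(0,1)$ from the tail behaviour of $F$; and the identity $F(x_u)=u$, where you correctly isolate the one place atomlessness is genuinely needed --- the inequality $F(x_u)\leq u$ uses left-continuity (an increasing sequence in the set plus continuity), while $F(x_u)\geq u$ would follow from right-continuity alone and holds for any distribution function. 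The event identity $\{F(X)\leq u\}=\{X\leq x_u\}$ then yields $\mathbb{P}(F(X)\leq u)=F(x_u)=u$, and since $F(X)$ is $[0,1]$-valued this pins down the uniform law. Your closing remark that the $\sup$-definition of $x_u$ is exactly what handles the flat portions of $F$ (where no ordinary inverse exists) is the right diagnosis of the only delicate point.
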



The following results are parallel to Lemma 9.8 and Corollary 9.9 of \cite{CarRas11}.
In the sequel, when we write ``measurable bijection'' we mean that both the function and its
inverse are measurable.

\begin{lemma}\label{end}
Let $(Y,W)$ be an $\mathbb{R}\times\mathbb{R}^k$-valued
random variable with Lebesgue almost everywhere positive density
$f(x^1,\ldots,x^{k+1})$.  Then there is a measurable bijection 
$H$ from $\mathbb{R}^{k+1}$ into $[0,1] \times \mathbb{R}^{k}$ such that
$H^i(x^1,\ldots,x^{k+1})=x^i$ for $i=2,\ldots,k+1$ and
$Z:=H^1(Y,W)$ is uniform on $[0,1]$, independent of $W$.
\end{lemma}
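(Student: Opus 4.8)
The plan is to build $H$ so that it transforms only the first coordinate, via the conditional distribution function of $Y$ given $W$, and leaves the $W$-coordinates fixed; the uniformity and independence of $Z$ then follow from a conditional application of Lemma \ref{trf} (the probability integral transform). Concretely, writing $w=(x^2,\ldots,x^{k+1})$, I would first disintegrate the density: the marginal density $f_W(w):=\int_{\mathbb{R}}f(t,w)\,dt$ is strictly positive for Lebesgue-a.e.\ $w$ by the a.e.\ positivity of $f$, and for each such $w$ the conditional density $y\mapsto f(y\mid w):=f(y,w)/f_W(w)$ is a bona fide probability density that is strictly positive for a.e.\ $y$. Consequently the conditional distribution function $F_w(y):=\int_{-\infty}^y f(t\mid w)\,dt$ is continuous and strictly increasing in $y$, hence a bijection of $\mathbb{R}$ onto the open interval $(0,1)$. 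I would then put $H^1(x^1,w):=F_w(x^1)$ and $H^i:=x^i$ for $i=2,\ldots,k+1$.

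Next I would verify the measurability statements. Joint measurability of $(y,w)\mapsto F_w(y)$ is immediate, since it is an integral of the jointly measurable integrand $f$ against $f_W(w)$. For the inverse, for fixed good $w$ the map $F_w$ is a continuous increasing bijection, and its inverse is recovered by the generalized-inverse formula $F_w^{-1}(z)=\inf\{y:F_w(y)\geqslant z\}$, which is jointly measurable in $(z,w)$ because $F_w(y)$ is jointly measurable and monotone in $y$. This exhibits $H$ as a measurable bijection with measurable inverse $H^{-1}(z,w)=(F_w^{-1}(z),w)$.

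For the distributional conclusion I would argue conditionally: for a.e.\ $w$ the conditional law of $Y$ given $W=w$ is atomless, being absolutely continuous with density $f(\cdot\mid w)$, so Lemma \ref{trf} applied to this law shows that $F_w(Y)$ is uniform on $[0,1]$ under $\mathbb{P}(\cdot\mid W=w)$. Since the resulting conditional law of $Z=F_W(Y)$ is uniform for a.e.\ $w$ and hence does not depend on $w$, it follows at once that $Z$ is uniform on $[0,1]$ and independent of $W$.

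The single point requiring genuine care — and the step I would treat as the main obstacle — is that $F_w$ takes values in the \emph{open} interval $(0,1)$ rather than the closed $[0,1]$ prescribed in the statement, and that all of the above holds only off a Lebesgue-null set of $w$. I would fix both defects simultaneously by choosing once and for all a Borel isomorphism $\psi:(0,1)\to[0,1]$ that differs from the identity on only a countable set (a Hilbert-hotel bijection moving a fixed countable subset to absorb the two new endpoints), redefining $H^1:=\psi\circ F_w$, and defining $H$ by any fixed rule on the exceptional set $\{w:f_W(w)=0\}$. Because $Z$ has a continuous distribution, modifying it on a countable set does not change its law, so $Z$ remains uniform; and because the exceptional $w$-set carries zero $\mathbb{P}$-mass for $W$, the uniformity and independence claims are untouched.
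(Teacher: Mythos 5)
Your proof follows essentially the same route as the paper's: transform only the first coordinate by the conditional distribution function $F_w$, note joint measurability and strict monotonicity (from a.e.\ positivity of $f$), and apply the probability integral transform (Lemma \ref{trf}) conditionally on $W=w$ to get uniformity of $Z$ and its independence from $W$. The extra care you take with the range $(0,1)$ versus $[0,1]$ and with the Lebesgue-null set of exceptional $w$ is sound but not strictly needed, since the statement only asks for a bijection \emph{into} $[0,1]\times\mathbb{R}^{k}$ and the distributional claims are unaffected by null sets; the paper simply passes over these points.
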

\begin{proof}
The conditional distribution function of $Y$ knowing
$W=(x^2,\ldots,x^{k+1})$,
\[
F(x^1,\ldots,x^{k+1}):=
\frac{\int_{-\infty}^{x^1}f(z,x^2,\ldots,x^{k+1})dz}{\int_{-\infty}^{\infty}
f(z,x^2,\ldots,x^{k+1})dz},
\]
is clearly measurable (in all its variables). By a.e. positivity of $f$, $F$ is also
strictly increasing in $x^1$ hence the function
\[
H:\, (x^1,\ldots,x^{k+1})\to (F(x^1,\ldots,x^{k+1}),x^2,\ldots,x^{k+1})
\]
is a measurable bijection. 
By Lemma \ref{trf} the conditional law $P(H^1(Y,W)\in\,\cdot\,\vert
W=(x^2,\ldots,x^{k+1}))$ is uniform on $[0,1]$ for Lebesgue-almost
all $(x^2,\ldots,x^{k+1})$, which shows that $H^1(Y,W)$ is
independent of $W$ with uniform law on $[0,1]$.
\end{proof}

\begin{corollary}\label{bank}
Let $(\tilde{W}_1,\ldots,\tilde{W}_k)$ be an $\mathbb{R}^k$-valued
random variable with a.e. positive density
(w.r.t. the $k$-dimensional Lebesgue measure). Then there are independent random
variables $W_1,\ldots,W_k$ and measurable bijections
$g_l(k):\mathbb{R}^l\to\mathbb{R}^l$, $1\leq l\leq k$ such that
$(\tilde{W}_1,\ldots,\tilde{W}_l)=g_l(k)(W_1,\ldots,W_l)$.
\end{corollary}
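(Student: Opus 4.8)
The plan is to argue by induction on the dimension $k$, peeling off one coordinate at a time with the help of Lemma \ref{end}. For $k=1$ there is nothing to disentangle: by Lemma \ref{trf} the variable $W_1:=F_1(\tilde W_1)$, where $F_1$ is the (continuous, strictly increasing) distribution function of $\tilde W_1$, is uniform on $[0,1]$, and $g_1(1):=F_1^{-1}$ is a measurable bijection with $\tilde W_1=g_1(1)(W_1)$.

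For the inductive step, assume the claim for $k-1$. First I would pass to the marginal $(\tilde W_1,\ldots,\tilde W_{k-1})$. Its density is obtained by integrating out the last variable, and a short Fubini argument shows that integrating an a.e. positive density again yields an a.e. positive density on $\mathbb{R}^{k-1}$; hence the induction hypothesis applies and furnishes independent $W_1,\ldots,W_{k-1}$ together with measurable bijections $g_l(k-1)$, $1\le l\le k-1$, such that $(\tilde W_1,\ldots,\tilde W_l)=g_l(k-1)(W_1,\ldots,W_l)$. Next I would extract the last coordinate: applying Lemma \ref{end} with the coordinates permuted so that $Y=\tilde W_k$ plays the role of the singled-out variable and $W=(\tilde W_1,\ldots,\tilde W_{k-1})$ the role of the remaining block (the a.e. positivity of the density is unaffected by this relabelling), I obtain a measurable bijection $H$ of $\mathbb{R}^k$ that fixes the first $k-1$ coordinates and for which $W_k:=H^k(\tilde W_1,\ldots,\tilde W_k)$ — the conditional distribution function of $\tilde W_k$ given $(\tilde W_1,\ldots,\tilde W_{k-1})$ — is uniform on $[0,1]$ and independent of $(\tilde W_1,\ldots,\tilde W_{k-1})$.

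It remains to assemble these pieces. Since $(\tilde W_1,\ldots,\tilde W_{k-1})=g_{k-1}(k-1)(W_1,\ldots,W_{k-1})$ is a bijective image, the generated $\sigma$-algebras coincide, $\sigma(W_1,\ldots,W_{k-1})=\sigma(\tilde W_1,\ldots,\tilde W_{k-1})$; consequently $W_k$ is independent of $\sigma(W_1,\ldots,W_{k-1})$, and together with the joint independence of $W_1,\ldots,W_{k-1}$ this yields the joint independence of $W_1,\ldots,W_k$. Finally I would set $g_l(k):=g_l(k-1)$ for $l<k$ (the lower relations are untouched) and
\[
g_k(k)(w_1,\ldots,w_k):=H^{-1}\bigl(g_{k-1}(k-1)(w_1,\ldots,w_{k-1}),\,w_k\bigr),
\]
which is a composition of measurable bijections and therefore a measurable bijection of $\mathbb{R}^k$ satisfying $(\tilde W_1,\ldots,\tilde W_k)=g_k(k)(W_1,\ldots,W_k)$.

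The only genuinely delicate points are the two measure-theoretic facts just used: that marginalising preserves a.e. positivity of the density (so that the induction hypothesis is applicable at each stage) and that Lemma \ref{end} may legitimately be applied to the last coordinate by symmetry. A minor bookkeeping issue is that the extracted variables are uniform on $[0,1]$ rather than genuinely $\mathbb{R}$-valued; this is harmless, since one may either read the bijections $g_l(k)$ as maps onto the relevant product of copies of $[0,1]$, or compose once and for all with a fixed Borel isomorphism $[0,1]\to\mathbb{R}$ to obtain bijections of $\mathbb{R}^l$ in the literal sense.
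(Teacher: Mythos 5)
Your proof is correct and follows essentially the same route as the paper's: induction on the dimension, peeling off the last coordinate with Lemma \ref{end} (applied after a permutation of coordinates) and composing the resulting measurable bijection with the one from the induction hypothesis, exactly as in the paper's construction of $a$ and $g_{k+1}(k+1):=a^{-1}$. The only differences are cosmetic: you make explicit two points the paper leaves implicit (that marginalising preserves a.e.\ positivity of the density, and that Lemma \ref{end} is invoked for the last rather than the first coordinate), and you treat the $k=1$ case via Lemma \ref{trf} where taking $W_1:=\tilde W_1$ with $g_1(1)$ the identity would suffice.
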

\begin{proof} The case $k=1$ is vacuous.
Assume that the statement is true for $k\geq 1$, let us prove it for
$k+1$. We may set $g_l(k+1):=g_l(k)$, $1\leq l\leq k$, it remains to
construct $g_{k+1}(k+1)$ and $W_{k+1}$.

We wish to apply Lemma \ref{end} in this induction step.
It provides a measurable bijection
$s:\mathbb{R}^{k+1}\to\mathbb{R}^{k+1}$ such that
$s^m(x^1,\ldots,x^{k+1})=x^m$, $1\leq m\leq k$ and
$W_{k+1}:=s^{k+1}(\tilde{W}_1,\ldots,\tilde{W}_{k+1})$ is
independent of $(\tilde{W}_1,\ldots,\tilde{W}_k)$ and hence of
$$(W_1,\ldots,W_k)=g_k(k)^{-1}(\tilde{W}_1,\ldots,\tilde{W}_k).$$
Define $a:\mathbb{R}^{k+1}\to\mathbb{R}^{k+1}$ by
\begin{eqnarray*}
a(x^1,\ldots,x^{k+1})& := & (g_k(k)^{-1}(x^1,\ldots,x^{k}),s^{k+1}(x^1,\ldots,x^{k+1}))\\
 & = &
s(g_k(k)^{-1}(x^1,\ldots,x^{k}),x^{k+1}),
\end{eqnarray*}
$a$ is clearly a measurable bijection. Notice that $a(\tilde{W}_1,
\ldots,\tilde{W}_{k+1})=(W_1,\ldots,W_{k+1})$. Set
$g_{k+1}(k+1):=a^{-1}$. This finishes the proof of the induction
step and hence concludes the proof.
\end{proof}

\begin{proof}[Proof of Proposition \ref{hv}.] Apply Corollary \ref{bank} with the choice
$k:=TN$ and $$\tilde{W}_{(t-1)N+l}:=\tilde{Z}_t^l,\quad l=1,\ldots,N$$
and $t=1,\ldots, T$. By the construction in Corollary \ref{bank}, {taking $Z_t^l:=W_{(t-1)N+l}$},  one has
$(\tilde{Z}_1,\ldots,\tilde{Z}_t)=g_{tN}(TN)(Z_1,\ldots,Z_t)$ hence
indeed $\mathcal{G}_t=\sigma(Z_1,\ldots,Z_t)$ for $t=1,\ldots,T$. 
\end{proof}

\section{Conclusions}
In this work we have proved 
two types of results concerning the optimal portfolio problem for a behavioural investor in a discrete-time setting
with power-like utility and distortion functions. 
Namely, well-posedness and the existence of optimal strategies were shown  
under two different conditions on the parameters. We managed to
remove some restrictive assumptions of \cite{CarRas11}. 
The evolution of the stock price is allowed to be very general, 
our results cover a myriad of models in discrete-time incomplete markets, see section 8 of \cite{CarRas11}
for examples. 

\noindent\textbf{Acknowledgment.} The second author gratefully acknowledges support from grants of CONACYT,
Mexico and of the University of Edinburgh. The first author thanks Laurence Carassus for pointing out
certain errors in the manuscript.

%
%
%
%
%

\end{document}